\newcommand{\rev}[1]{{\color{blue}#1}} 
\newcommand{\com}[1]{{\color{red}\textbf{Comment}:#1}}
\newcommand{\rev}[1]{#1}
\newcommand{\com}[1]{}
\newtheorem{definition}{Definition}
\newtheorem{theorem}{Theorem}
\newtheorem{proposition}{Proposition}
\newtheorem{example}{Example}
\newtheorem{assumption}{Assumption}
\newtheorem{proof}{Proof}
\title{Interventions in network games with community structure: local planning, budget allocation and efficiency}
\author{Kun Jin and Mingyan Liu
\thanks{Kun Jin and Mingyan Liu with the Electrical and Computer Engineering Department, University of Michigan, Ann Arbor, MI 48109, USA
        {$\lbrace$\tt\small kunj, mingyan$\rbrace$ @umich.edu}}}%
\begin{document}
\maketitle

\begin{abstract}

Network games study the strategic interaction of agents connected through a network. Interventions in such a game -- actions a coordinator or planner may take that change the utility of the agents and thus shift the equilibrium action profile -- are introduced to improve the planner's objective. We study the problem of intervention in network games where the network has a group structure with local    planners, each associated with a group. The agents play a non-cooperative game while the planners may or may not have the same optimization objective. We model this problem using a sequential move game where planners make interventions followed by agents playing the intervened game. We provide equilibrium analysis and algorithms that find the subgame perfect equilibrium.  We also propose a two-level efficiency definition to study the efficiency loss of  equilibrium actions in this type of games.

\end{abstract}

\section{INTRODUCTION}

Strategic decision making of (physically or logically) connected agents is often studied as a network game, where the utility of an agent depends on its own actions as well as that of those in its neighborhood as defined by an interaction graph or adjacency matrix. This framework can be used to capture different forms of interdependencies between agents' decisions. Network games and their equilibrium outcomes have been studied in a variety of application areas, including the private provision of public goods \cite{allouch2015private,buckley2006income,khalili2019public}, security decision making in interconnected cyber-physical systems \cite{hota2018interdependent,la2016interdependent}, and shock propagation in financial markets \cite{acemoglu2012network}. 

Within this context, {\em intervention} in a network game typically refers to 
changes in certain game parameters made by a utilitarian welfare maximizer with a budget constraint, who wishes to induce a more socially desirable outcome (in terms of social welfare) under the revised game. A prime example is the study presented in \cite{Galeotti2017}, where intervention takes the form of changing the agents' standalone marginal benefit terms (in a linear quadratic utility model) and changes are costly; this is done by a central/global planner, who wishes to find the set of interventions that lead to the highest equilibrium social welfare subject to a cost constraint. 

Finding optimal interventions could be viewed as a form of mechanism design, because in both cases the design or intervention essentially induces a new game form with desirable properties.  But there are a few distinctions between intervention and the standard mechanism design framework.  Specifically, mechanism design is often not limited to a specific game form, the latter being the outcome of the design, while intervention typically starts from a specified game form and seeks improvement through local changes.  Mechanism design typically has the goal of social optimality (i.e., that the outcome/equilibria of the designed game are social welfare maximizing), while intervention aims to do the best under the constraints of a budget and specified forms of intervention.  


In this paper, we are interested in intervention in a network game where the network exhibits a group or community structure and each group or community has its own local group planner.  Since group structures are a common phenomenon across networks of all types, be it social, technological, political, or economic, this modeling consideration allows us to investigate a number of interesting features that often arise in realistic strategic and decentralized decision making.  For instance, a single global budget may be first divided into separate chunks of local budgets at the local planners' disposal; these local budgets may or may not be transferred from one community to another, and the local planners' decisions may or may not take into account the connectivity between themselves and other neighboring communities; local planners may or may not wish to cooperate with each other; and so on.  


Of particular interest to our study is the issue of efficiency in this type of decision making systems.  A standard notion used to measure efficiency loss in a strategic game is the Price of Anarchy (POA); this is defined as the upper bound on the ratio of the maximum social welfare (sum utility) divided by the social welfare attained at a Nash equilibrium (NE) of the game.  The numerator is what a social planner aims for, while the denominator is the result of agents optimizing their own utilities and best-responding to each other.  POA has been extensively studied in a variety of games, including in interdependent security games such as \cite{Roughgarden2009, Naghizadeh2014IDS}, where agents' incentive to free-ride or over-consume contributes to the efficiency loss; in routing and congestion games  \cite{Roughgarden2003POA, Christodoulou2005POA}; and in network creation games  \cite{Demaine2012POA}. 


It is not hard to see additional sources of efficiency loss exist in the community intervention problem we are interested in: \rev{in addition to agents' self-interested decision making, local planners' non-cooperation as well as sub-optimal budget allocation among groups can both results in efficiency loss.}
%
The main findings of the paper are summarized as follows:  
\begin{enumerate} 
\item We show that through backward induction the planners can obtain a reduced version of the planners' game that only depends on each other's intervention profiles. Regardless of being cooperative or not, the sequential game always have a unique subgame perfect equilibrium. Moreover, this equilibrium can be achieved through a decentralized algorithm based on the best responses of the planners. 
\item We introduce a two-level definition of efficiency loss that allows us to discuss how the planners' actions influence the outcome of the game separately from the agents' actions, and we show that the efficiency loss due to the planners' non-cooperation can be characterized with the budget constraints and shadow prices. 
\item We present numerical results on welfare and  efficiency in several commonly seen types of interaction graphs and commonly used budget allocation rules.
\end{enumerate} 

The remainder of the paper is organized as follows. Section \ref{sec:model} introduces our intervention game model and present the objectives for agents and group planners in different scenarios. Then in section \ref{sec:equilibrium}, we show our analysis and characterization on the subgame perfect equilibrium of the intervention game. In section \ref{sec:efficiency}, we study the Level-1 and Level-2 efficiencies of the subgame perfect equilibrium. We present our numerical experiment results in section \ref{sec:numerical}. Finally, section \ref{sec:conclusion} concludes this work.

\section{Game Model}\label{sec:model}

We consider a network game among $N$ agents, denoted by $a_1, \dots, a_N$, represented by a directed graph $\mathcal{G} = (\mathcal{N}, \mathcal{E})$, where $\mathcal{N} $ is the set of nodes/agents and $\mathcal{E} \subseteq \mathcal{N}\times\mathcal{N}$ the set of edges. 
Let $G = (g_{ij})_{i,j}$ denote the adjacency matrix, assumed to be symmetric and as a convention $g_{ii} = 0$; $g_{ij} \neq 0$ implies dependence between $a_i$ and $a_j$, $i\neq j$. 

Agents are divided into $M$ disjoint communities, the $k$th community denoted as $S_k$ with size $N_k$.
Agent $a_i$ takes an action $x_i \in \mathbf{R}$. Let $\pmb{x}_{-i} = [x_1,\dots, x_{i-1}, x_{i+1}, \dots, x_N]^T$ denote the action profile of all except $a_i$, $\pmb{x}_{S_k} = (x_i)_{a_i \in S_k}$ the action profile of members in community $S_k$, and $\pmb{x}_{-S_k}$ the action profile of all agents other than members of $S_k$.  

We consider a family of games with utility: 
\begin{equation}
    u_i(x_i, \pmb{x}_{-i}, y_i) = (b_i + y_i) x_i - \frac{1}{2} x_i^2 + x_i \bigg( \sum_{j \neq i} g_{ij} x_j \bigg)~, 
\end{equation}
which depends on the action profile 
and a real valued parameter $y_i$ controlled by a planner. This utility function with intervention is studied in \cite{Galeotti2017, Candogan2011}. The $-\frac{1}{2} x_i^2$ is the individual cost for $a_i$, the $b_i x_i$ term is the initial individual marginal benefit, and $x_i \bigg( \sum_{j \neq i} g_{ij} x_j \bigg)$ models the network influence. The intervention component can be seen as a linear subsidiary (discount) term if $y_i > 0$ and a linear penalty (price) term if $y_i < 0$. In this non-cooperative game, the optimization problem of agent $a_i$ for given intervention is
\begin{equation} \label{eq:agent_prob}
    \underset{x_i}{\text{maximize}} ~~u_i(x_i, \pmb{x}_{-i}, y_i). 
\end{equation}
The NE of the game $\pmb{x}^*$, is the action profile where no agent has an incentive to unilaterally deviate, i.e.,
\begin{equation} \label{eq:NE_defn}
    x_i^* = \underset{x_i}{\text{argmax}} ~~ u_i(x_i, \pmb{x}^*_{-i}).
\end{equation}

We denote the planner for $S_k$ as $p_k$, which has a budget constraint $C_k > 0$: $\sum_{a_i \in S_k} y_i^2 \leq C_k$. We denote $\pmb{y}_{S_k} = (y_i)_{a_i \in S_k}$ as the intervention profile of $p_k$ and $\pmb{y}_{-S_k}$ the intervention profile of planners other than $p_i$.  Denote $Q_k = \{ \pmb{y}_{S_k} | \sum_{a_i \in S_k} y_i^2 \leq C_k\}$; thus $Q_k$ is  nonempty, convex and compact. Finally, $Q = \prod_{i=1}^M Q_i$.

We consider two cases.  In the first, planner $p_k$ is a \textit{group-welfare maximizer}, whose objective is to maximize the sum of its members' utilities at the NE, formally
\begin{equation} \label{eq:planner_obj}
    \underset{\pmb{y}_{S_k} \in Q_k}{\text{maximize}} ~~ U_k(\pmb{x}^*, \pmb{y}_{S_k}) = \sum_{a_i \in S_k} u_i(\pmb{x}^*, y_i),
\end{equation}
and we denote $\pmb{y}_{S_k}^* = \underset{\pmb{y}_{S_k} \in Q_k}{\text{argmax}}~ U_k(\pmb{x}^*, \pmb{y}_{S_k})$. When all planners are group-welfare maximizers, we say they are \textit{non-cooperative}.

In the second case,  planner $p_k$ is a \textit{social-welfare maximizer}, whose objective is to maximize the sum of all agents' utilities at the NE, formally
\begin{equation} \label{eq:planner_obj_soc}
    \underset{\pmb{y}_{S_k} \in Q_k}{\text{maximize}} ~~ U(\pmb{x}^*, \pmb{y}_{S_k}, \pmb{y}_{-S_k}) = \sum_{i = 1}^N u_i(\pmb{x}^*, y_i),
\end{equation}
and we denote $\overline{\pmb{y}}_{S_k} = \underset{\pmb{y}_{S_k} \in Q_k}{\text{argmax}}~ U(\pmb{x}^*, \pmb{y}_{S_k}, \overline{\pmb{y}}_{-S_k})$. When all planners are social-welfare maximizers, we say they are \textit{cooperative}.



\begin{figure}[t]
\centering
\begin{minipage}{.48\textwidth}
    \centering
	\includegraphics[width=\linewidth]{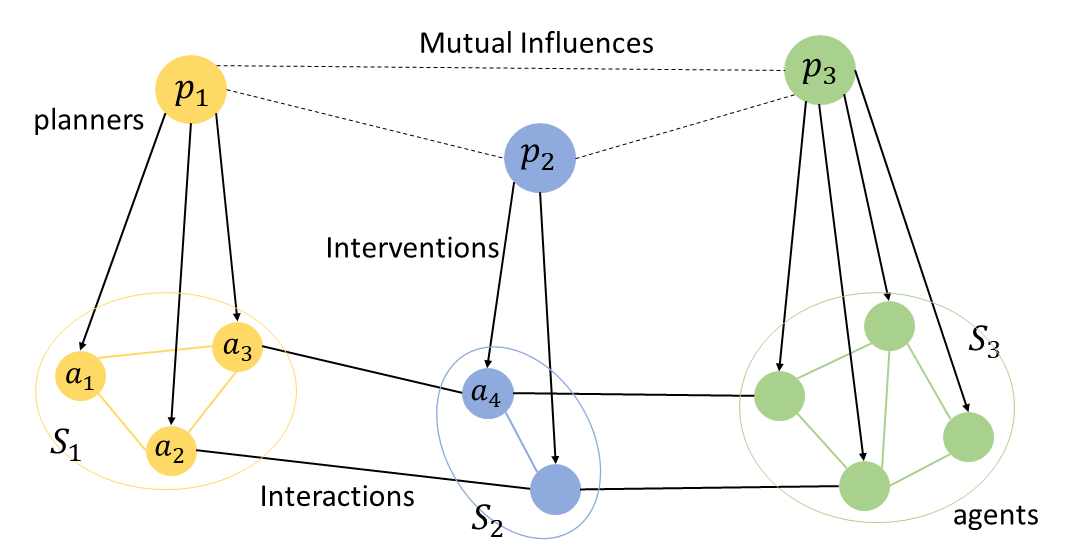}
    \caption{An intervention game with 3 communities.}
    \label{fig:network}
\end{minipage}%
\hspace{2mm}
\begin{minipage}{.48\textwidth}
    \centering
    \includegraphics[width=\linewidth]{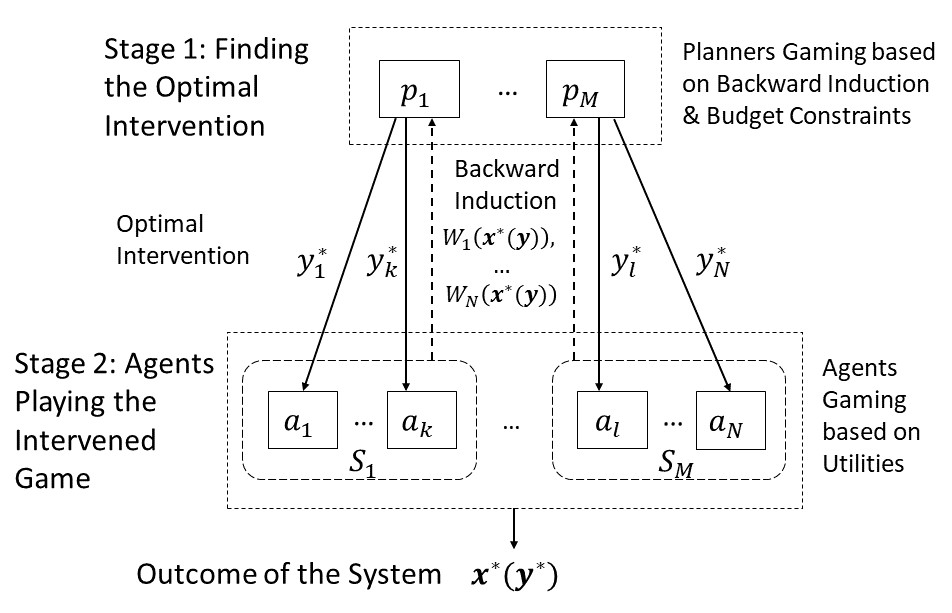}
    \caption{A flow chart of the intervention game.}
    \label{fig:system_flow_chart}
\end{minipage}
\end{figure}

Figure \ref{fig:network} shows the structure of the intervention game described in this section. It's easy to see that with a single planner ($M=1$), the above may be viewed as a two-stage game: the first mover the planner chooses the intervention actions $\mathbf{y}$ in anticipation of the (simultaneous) second movers the agents playing the induced game with actions $\mathbf{x}$. There is a similar two-stage sequentiality in the case of $M$ local planners as shown in Figure \ref{fig:system_flow_chart}: the local planners are simultaneous first movers in choosing interventions for their respective communities, in anticipation of interventions by other local planners and actions by the simultaneous second movers the agents.  For this reason the solution concept we employ this study is the subgame perfect equilibrium. 



\section{The subgame perfect equilibrium} \label{sec:equilibrium}
In this section, we characterize the subgame perfect equilibrium of the system and introduce an algorithm to compute it. We assume the following holds throughout this paper:
\begin{assumption} \label{assumption:invertible}
    Matrix $2\text{diag}(\pmb{c}) - W$ is positive definite.
\end{assumption}

We start with computing the NE under an arbitrary intervention, we can compute the first order derivatives as follows
\begin{equation} \label{eq:first_ord_derivatives}
    \frac{\partial u_i}{\partial x_i} = - x_i + \sum_{j \neq i} g_{ij} x_j + (b_i + y_i),
\end{equation}
then by computing the fixed point, we know the unique NE of the game is
\begin{equation} \label{eq:NE}
    \pmb{x}^* = [I - G]^{-1} (\pmb{b} + \pmb{y}), 
\end{equation}
Denote $A = [I - G]^{-2}$ for  simplicity of notation. This NE is known to all planners through backward induction.

\subsection{Finding the subgame perfect equilibrium}

We denote $G_{S_k, S_l}$ as the block of $G$ corresponding to the rows in $S_k$ and columns in $S_l$, and $G_{S_k, \star}$ as the block of $G$ corresponding to the rows in $S_k$ and all columns.
It's worth noting that given the representation of $\pmb{x}^*$ in Eqn (\ref{eq:NE}), the objective of a group welfare maximizer $p_k$ is (See Appendix)
\begin{equation} \label{eq:planner_obj_as_NE}
    U_k(\pmb{x}^*, \pmb{y}_{S_k}) = \frac{1}{2} (\pmb{x}^*_{S_k})^T \pmb{x}^*_{S_k}.
\end{equation}

We can then rewrite the objective of $p_k, \forall k$, and the non-cooperative optimization problem \rev{(P-NC$k$)} as
\begin{align} \label{eq:obj_alt}
    \text{maximize} ~&~ W_k(\pmb{y}) = \frac{1}{2} ||(A^{1/2}(\pmb{y} + \pmb{b}))_{S_k}||_2^2  \nonumber \\
    \text{subject to} ~&~ \sum_{i:a_i \in S_k} (y_i)^2 \leq C_k~. 
\end{align}

It's worth noting that this doesn't imply the planners' optimization problems are independent, since we can write $\pmb{x}^*_{S_k}$ as
$\pmb{x}^*_{S_k} = (A^{1/2}_{S_k,\star} ) (\pmb{y} + \pmb{b})$,
which depends on $\pmb{y}_{-S_k}$ unless $S_k$ is isolated.

Similarly, we can rewrite the cooperative optimization problem \rev{(P-C)} where all planners are social welfare maximizers
\begin{align} \label{eq:soc_opt_prob}
    \text{maximize} ~&~ W(\pmb{y}) = \frac{1}{2} (\pmb{y} + \pmb{b})^T A (\pmb{y} + \pmb{b})  \nonumber \\
    \text{subject to} ~&~ \sum_{i:a_i \in S_k} y_i^2 \leq C_k, ~~\forall k
\end{align}

We can also write out the decentralized version of (P-C) where each planner $p_k$ has its own optimization problem (P-C$k$) given other planners' intervention profile $\pmb{y}_{-S_k}$
\begin{align} \label{eq:soc_opt_prob_pk}
    \text{maximize} ~&~ W(\pmb{y}) = \frac{1}{2} (\pmb{y} + \pmb{b})^T A (\pmb{y} + \pmb{b})  \nonumber \\
    \text{subject to} ~&~ \sum_{i:a_i \in S_k} (y_i)^2 \leq C_k.
\end{align}

We have the following result.  
\begin{theorem} \label{thm:unique_eq}
If all  planners are group-welfare maximizers or if they are all social-welfare maximizers, then in each case there is a unique optimal intervention, i.e., unique subgame perfect equilibrium, and under the optimal intervention, the budget constraints are tight. 
\end{theorem}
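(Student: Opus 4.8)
The plan is to handle both cases through one template, since in each the subgame-perfect equilibrium is precisely a Nash equilibrium of the reduced planners' game, and every planner's best response is a trust-region-type subproblem. Isolating the $\pmb{y}_{S_k}$-dependence of the objective (with $\pmb{y}_{-S_k}$ frozen) writes planner $p_k$'s problem as
\begin{equation*}
\max_{\pmb{y}_{S_k}}~ \tfrac{1}{2}\pmb{y}_{S_k}^T H_k \pmb{y}_{S_k} + \pmb{h}_k^T \pmb{y}_{S_k} \quad\text{s.t.}\quad \|\pmb{y}_{S_k}\|_2^2 \le C_k ,
\end{equation*}
where $H_k$ is the $S_k$-principal submatrix of $A$ for the cooperative problem (P-C$k$) and the square $((A^{1/2})_{S_k,S_k})^2$ for the non-cooperative problem (P-NC$k$), and the linear term $\pmb{h}_k$ collects the standalone benefits $\pmb{b}$ together with the inter-community couplings carried by $\pmb{y}_{-S_k}$. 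First I would record that Assumption \ref{assumption:invertible} makes $A=[I-G]^{-2}$ symmetric positive definite, hence every principal submatrix, and therefore every $H_k$, is positive definite; so each best response maximizes a \emph{strictly convex} quadratic over a Euclidean ball.

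I would dispatch budget tightness first, as the easy half. Because $H_k\succ0$, the objective is strictly convex, so any stationary point interior to the ball is a strict local \emph{minimum} and cannot be a maximizer; a maximizer exists by compactness (Weierstrass) and must therefore lie on the sphere $\|\pmb{y}_{S_k}\|_2^2=C_k$. This holds at each planner's best response irrespective of the others, so once equilibrium existence is established the budget constraints are tight.

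For uniqueness of a single best response I would use the trust-region optimality characterization: the global maximizer satisfies $(\mu I - H_k)\pmb{y}_{S_k}=\pmb{h}_k$ for some $\mu\ge\lambda_{\max}(H_k)$ with $\mu I\succeq H_k$. Parametrizing $\pmb{y}_{S_k}(\mu)=(\mu I-H_k)^{-1}\pmb{h}_k$ for $\mu>\lambda_{\max}(H_k)$ and exploiting the strict monotonicity of $\mu\mapsto\|\pmb{y}_{S_k}(\mu)\|_2$ on that interval, there is exactly one $\mu^\star$ meeting the norm equation, at which $\mu^\star I-H_k\succ0$ is invertible and the best response is the unique vector $\pmb{y}_{S_k}(\mu^\star)$. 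The single degeneracy to exclude is the trust-region \emph{hard case}, in which $\pmb{h}_k$ is orthogonal to the top eigenspace of $H_k$ and ties can arise; I would rule it out from the nondegeneracy of the effective linear term $\pmb{h}_k$, treating the general position of $\pmb{b}$ as the operative hypothesis.

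Finally, the plan is to lift per-planner uniqueness to uniqueness of the full equilibrium. The previous step makes the best-response map $T:\pmb{y}_{-S_k}\mapsto\pmb{y}_{S_k}^\star$ single-valued, so an SPE is exactly a fixed point of the stacked map $T$ on $Q$. I would then show $T$ is a contraction: the intra-community curvature (the spectral gap between $\mu^\star$ and $H_k$) dominates the off-block sensitivity $\partial\pmb{h}_k/\partial\pmb{y}_{-S_k}$, which is governed by the off-diagonal blocks of $A$ (respectively $A^{1/2}$) and is controlled by Assumption \ref{assumption:invertible}, whence Banach's theorem gives a unique SPE; in the cooperative case the shared objective $W$ additionally identifies this fixed point as the unique global maximizer of (P-C). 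The hard part will be this contraction estimate — bounding inter-community coupling against intra-community curvature — which, together with excluding the hard case, is where the real work lies; the remainder is standard trust-region and fixed-point reasoning.
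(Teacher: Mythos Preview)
Your plan differs from the paper's in its core machinery. The paper casts the planners' equilibrium as the variational inequality $\text{VI}(Q,F)$ with pseudo-gradient $F(\pmb{y})=\bigl(-\nabla_{\pmb{y}_{S_k}} W_k(\pmb{y})\bigr)_{k}$ (respectively with $W$ in place of $W_k$ in the cooperative case), observes that the Jacobian of $F$ is the constant matrix $\tilde A$ (respectively $A$), and verifies positive definiteness via the identity $2\pmb z^{T}\tilde A\,\pmb z = \pmb z^{T} A\,\pmb z + \sum_k \pmb z_{S_k}^{T} A_{S_k,S_k}\pmb z_{S_k}>0$. The $P_\Upsilon$ criterion of \cite{Scutari2014} then yields strong monotonicity of $F$ and hence a unique VI solution. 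Crucially, this route never needs best responses to be single-valued and never estimates a contraction constant.

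Your route leaves precisely those two points as open obligations, and neither is as routine as the plan suggests. The hard-case exclusion cannot rest on ``general position of $\pmb b$'': the theorem carries no such hypothesis, and since $\pmb h_k$ depends on $\pmb y_{-S_k}$, even a generic $\pmb b$ does not preclude $\pmb h_k$ falling in the orthogonal complement of the top eigenspace of $H_k$ for some opponent profile --- at which point $T$ becomes set-valued and Banach does not apply. The contraction bound is likewise not a direct consequence of $A\succ0$: the multiplier $\mu^\star$ moves with $\pmb y_{-S_k}$ through the norm equation, and the implicit-function sensitivity of $T_k$ is not obviously dominated by the gap $\mu^\star-\lambda_{\max}(H_k)$, which can be arbitrarily small near the hard case. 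The VI/strong-monotonicity argument is exactly what lets the paper sidestep both difficulties in one stroke, and it is also what drives the BRD-convergence proof of Theorem~\ref{thm:alg_converge}. Your budget-tightness argument (strictly convex objective $\Rightarrow$ maximizer on the sphere) is correct and in fact cleaner than what the paper records for that clause.
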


This is obvious when all planners are social-welfare maximizers from Eqn (\ref{eq:soc_opt_prob}); for the other case see Appendix.
We also propose the following decentralized algorithm based on best response dynamics (BRD) that computes the subgame perfect equilibrium in both cases. Note that the planners' best-response computation utilizes Eqn (\ref{eq:obj_alt}) and (\ref{eq:soc_opt_prob}).
\begin{algorithm} 
    \caption{Planners' BRD} 
    \label{alg:compute_eq}
    
    \begin{algorithmic}
        \State Initialize: $\pmb{y}(0) = \pmb{y}_0$, $t = 0$
        \While{$\pmb{y}$ not converged}
            \ForAll {$k = 1:M$}
                \State $\pmb{y}_{S_k}(t+1) = \underset{\pmb{y}_{S_k} \in Q_k}{\text{argmax}}~ U_k( \pmb{y}_{S_k}, \pmb{y}_{-S_k}(t))$ \\
                \Comment{best response w.r.t objective $U_k$, which can be either group or social welfare}
            \EndFor
            \State $t \leftarrow t+1$
        \EndWhile
        \State Set optimal intervention profile as $\pmb{y}^* = \pmb{y}(t)$
        \State Compute the agents' Nash equilibrium in the intervened game $\pmb{x}^* = (I-G)^{-1}(\pmb{b}+\pmb{y}^*)$
    \end{algorithmic}
\end{algorithm}

\begin{theorem} \label{thm:alg_converge}
    If all  planners are group-welfare maximizers or if they are all social-welfare maximizers, then in each case Algorithm \ref{alg:compute_eq} converges to the unique subgame perfect equilibrium under Assumption \ref{assumption:invertible}.
\end{theorem}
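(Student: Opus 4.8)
The plan is to read Algorithm~\ref{alg:compute_eq} as a cyclic best-response (block coordinate) iteration of a map $T:Q\to Q$ whose $k$-th block returns the constrained maximizer of $W_k$ in \eqref{eq:obj_alt} (non-cooperative) or of $W$ in \eqref{eq:soc_opt_prob_pk} (cooperative) given the other planners' profiles, and then to show its iterates converge to the unique fixed point, which by Theorem~\ref{thm:unique_eq} is the subgame perfect equilibrium. The common foundation I would lay first is a well-posedness lemma for a single block update. Assumption~\ref{assumption:invertible} makes $A=[I-G]^{-2}$ symmetric positive definite with $A^{1/2}=[I-G]^{-1}$, so in either case the $k$-th block subproblem has the form $\max \tfrac12\,\pmb{y}_{S_k}^T H_k\,\pmb{y}_{S_k}+\pmb{v}^T\pmb{y}_{S_k}$ over the ball $\sum_{a_i\in S_k}y_i^2\le C_k$, where the block Hessian $H_k$ is positive definite (equal to $A_{S_k,S_k}$ in the cooperative case and to $(B_{S_k,S_k})^2$ with $B=A^{1/2}$ in the non-cooperative case) and $\pmb{v}$ is an affine function of $\pmb{y}_{-S_k}$. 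This is a trust-region problem with a convex objective, so its maximizer lies on the sphere (recovering the tightness in Theorem~\ref{thm:unique_eq}), is characterized by $(2\mu_k I-H_k)\pmb{y}_{S_k}=\pmb{v}$ with multiplier $\mu_k\ge\tfrac12\lambda_{\max}(H_k)$ pinned down by $\|\pmb{y}_{S_k}\|^2=C_k$, and is single-valued and Lipschitz in $\pmb{v}$ away from the degenerate hard case. Hence $T$ is a well-defined continuous self-map of the compact set $Q$, and a full BRD round is the continuous composition of the $M$ block maps.

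For the cooperative case the remaining argument is clean, because all planners share the single objective $W$ of \eqref{eq:soc_opt_prob}, so the iteration is exact block coordinate ascent on $W$. Each block update weakly increases $W$, and since $W$ is continuous on the compact set $Q$ the value sequence $W(\pmb{y}(t))$ is monotone nondecreasing and bounded, hence convergent. Using continuity of $T$ from the block lemma, I would argue that any limit point $\bar{\pmb{y}}$ of $\{\pmb{y}(t)\}$ satisfies $W(T(\bar{\pmb{y}}))=W(\bar{\pmb{y}})$ and therefore, because no block step can strictly improve the value at $\bar{\pmb{y}}$, that $\bar{\pmb{y}}$ is a block-wise maximum, i.e.\ a fixed point of $T$ and thus an equilibrium. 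Uniqueness from Theorem~\ref{thm:unique_eq} forces all limit points to coincide, and a standard compactness argument upgrades subsequential convergence to convergence of the whole sequence.

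The non-cooperative case is the crux, since the natural candidate potential $W=\sum_k W_k$ is \emph{not} an exact potential for the game: changing $\pmb{y}_{S_k}$ also moves $W_l$ for $l\neq k$ through $\pmb{x}^*_{S_l}$, so the monotone-ascent argument is unavailable. Here I would instead show $T$ is a contraction. Starting from the fixed-point characterization $(2\mu_k I-H_k)\pmb{y}_{S_k}=A_{S_k,S_k}\pmb{b}_{S_k}+A_{S_k,-S_k}(\pmb{y}_{-S_k}+\pmb{b}_{-S_k})$ of each best response, I would perform a sensitivity analysis bounding $\|T(\pmb{y})-T(\pmb{y}')\|$ in terms of the off-diagonal coupling blocks $A_{S_k,-S_k}$ relative to the curvature $H_k$ and multiplier $\mu_k$, and use Assumption~\ref{assumption:invertible} to make this bound uniformly strictly less than one; Banach's fixed-point theorem then yields geometric convergence to the unique fixed point given by Theorem~\ref{thm:unique_eq}.

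The main obstacle I anticipate is precisely this sensitivity estimate, and it is shared (in weaker form) by the continuity claim used above: because the best response solves a trust-region subproblem, its dependence on $\pmb{v}(\pmb{y}_{-S_k})$ is nonsmooth in the hard case, where $\pmb{v}$ is orthogonal to the top eigenspace of $H_k$ and the active multiplier collides with $\tfrac12\lambda_{\max}(H_k)$. I would rule this out using the strict positive-definiteness from Assumption~\ref{assumption:invertible} together with the tightness of the sphere constraint (and $\pmb{b}\neq\pmb{0}$ ensuring $\pmb{v}$ has a component along the top eigenspace), which keeps $\mu_k$ uniformly bounded away from $\tfrac12\lambda_{\max}(H_k)$ and hence makes $(2\mu_k I-H_k)^{-1}$ and the maximizer Lipschitz with a controllable constant. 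Establishing this uniform Lipschitz/contraction constant, rather than the fixed-point bookkeeping, is where the real work lies.
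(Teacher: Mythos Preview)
Your route differs from the paper's and, in the non-cooperative case, leaves a real gap. The paper treats both cases in one stroke via the variational-inequality framework: it takes $F(\pmb{y})=\big(-\nabla_{\pmb{y}_{S_k}}U_k\big)_k$, observes in the proof of Theorem~\ref{thm:unique_eq} that the associated block matrix $\Upsilon(F)$ equals $A$ (cooperative) or $\tilde A$ (non-cooperative), both positive definite under Assumption~\ref{assumption:invertible}, and then invokes the result of \cite{Scutari2014} that the $P_\Upsilon$ condition forces (synchronous or asynchronous) BRD to converge to the unique fixed point. The contraction you set out to build by hand is exactly what that cited result packages, and no trust-region sensitivity analysis is needed.

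Your coordinate-ascent argument for the cooperative case is a reasonable alternative, though note that Algorithm~\ref{alg:compute_eq} as written is a Jacobi update (each block computed from $\pmb{y}_{-S_k}(t)$), for which ``each block update weakly increases $W$'' does \emph{not} yield $W(\pmb{y}(t+1))\ge W(\pmb{y}(t))$; you would need either a Gauss--Seidel reading of the loop or an additional argument. The non-cooperative case has the more serious gap: you never establish that the best-response map is a contraction, and Assumption~\ref{assumption:invertible} does not by itself hand you a Lipschitz constant strictly below one without essentially reproducing the $P_\Upsilon$ machinery. Your hard-case fix is also not justified: since $\pmb{v}$ is affine in $\pmb{y}_{-S_k}$, even $\pmb{b}\neq\pmb{0}$ (which the paper does not assume; several later results are stated at $\pmb{b}=\pmb{0}$) gives no uniform lower bound on the component of $\pmb{v}$ along the top eigenspace of $H_k$ over the BRD trajectory, so $\mu_k$ need not stay bounded away from $\tfrac12\lambda_{\max}(H_k)$.

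There is an elementary shortcut you overlooked. Proposition~\ref{prop:noncoop_equivalent} shows $\nabla_{\pmb{y}_{S_k}}\tilde W=\nabla_{\pmb{y}_{S_k}}W_k$ for every $k$, i.e.\ $\tilde W(\pmb{y})=\tfrac12(\pmb{y}+\pmb{b})^T\tilde A(\pmb{y}+\pmb{b})$ is an \emph{exact} potential for the non-cooperative planners' game. With that observation your (Gauss--Seidel) monotone-ascent argument covers both cases uniformly, bypassing both the contraction estimate and the VI citation.
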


\begin{proof}
    \textit{(Sketch)} The proof is based on the Jacobian of the best response mappings of the planners. The Jacobian matrix is positive definite when Assumption \ref{assumption:invertible} is true, and thus there is a unique fixed point and the best response mappings have contraction properties. Therefore, the algorithm  converges to the unique fixed point, i.e., the unique optimal intervention profile, which then leads to the unique subgame perfect equilibrium in the game.
\end{proof}

Please see Appendix for the full proof.

\begin{proposition} \label{prop:noncoop_equivalent}
The following cooperative optimization problem \rev{(P-NC-alt)} has the same optimal intervention outcome as the original non-cooperative problem \rev{(P-NC$k$)} where all planners are group-welfare maximizers: 
\begin{align} \label{eq:non_coop_alt}
    \text{maximize} ~&~ \Tilde{W}(\pmb{y}) = \frac{1}{2} (\pmb{y} + \pmb{b})^T \Tilde{A} (\pmb{y} + \pmb{b})  \nonumber \\
    \text{subject to} ~&~ \sum_{i:a_i \in S_k} y_i^2 \leq C_k, \forall k
\end{align}
where
\begin{equation*}
    \Tilde{A} = \begin{bmatrix} 
    A_{S_1, S_1} & \frac{1}{2} A_{S_1,S_2} & \cdots & \frac{1}{2} A_{S_1, S_M} \\
    \frac{1}{2} A_{S_2, S_1} & A_{S_2,S_2} & \cdots & \frac{1}{2} A_{S_2, S_M} \\
    \vdots & \vdots & \ddots & \vdots \\
    \frac{1}{2} A_{S_M, S_1} & \frac{1}{2} A_{S_M,S_2} & \cdots & A_{S_M, S_M} 
    \end{bmatrix}
\end{equation*}

\end{proposition}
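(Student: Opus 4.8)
The plan is to recognize (P-NC-alt) as a \emph{potential-function reformulation} of the non-cooperative planners' game: I would try to show that the symmetric matrix $\Tilde{A}$ defines an exact potential $\Tilde{W}$ for the game in which each $p_k$ maximizes $W_k$ over $Q_k$, so that the unique constrained maximizer of $\Tilde{W}$ is forced to coincide with the unique subgame perfect equilibrium of (P-NC$k$). Theorem \ref{thm:unique_eq} already supplies the two uniqueness facts I need: the non-cooperative game has a unique equilibrium with all budget constraints tight, and the same reasoning applied to (P-NC-alt) (a single quadratic program over the product $Q=\prod_k Q_k$ of budget balls) gives a unique maximizer, again with tight budgets. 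Consequently the whole task reduces to showing that the two problems share the same first-order (KKT) system.

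Writing $\pmb{z}=\pmb{y}+\pmb{b}$, the first step is to record the stationarity conditions under the tight budgets. For (P-NC$k$) planner $p_k$'s condition is $\nabla_{\pmb{y}_{S_k}}W_k(\pmb{y}) = 2\lambda_k\pmb{y}_{S_k}$ with $\sum_{i:a_i\in S_k}y_i^2 = C_k$, while for (P-NC-alt) the block-$k$ condition is $\nabla_{\pmb{y}_{S_k}}\Tilde{W}(\pmb{y}) = 2\mu_k\pmb{y}_{S_k}$ with the same budget equality. I would then compute the potential's block gradient explicitly: $\nabla_{\pmb{y}_{S_k}}\Tilde{W} = \Tilde{A}_{S_k,\star}\pmb{z} = A_{S_k,S_k}\pmb{z}_{S_k} + \tfrac12\sum_{l\neq k}A_{S_k,S_l}\pmb{z}_{S_l} = \tfrac12 A_{S_k,S_k}\pmb{z}_{S_k} + \tfrac12(A\pmb{z})_{S_k}$. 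The key lemma to be verified is the exact-potential identity $\nabla_{\pmb{y}_{S_k}}W_k = \nabla_{\pmb{y}_{S_k}}\Tilde{W}$ for every $k$. Once this identity is in hand, the two KKT systems are literally the same set of equations, so the two unique solutions must agree (the multipliers $\lambda_k,\mu_k\geq 0$ being pinned down by the tight budgets), which is exactly the claimed equality of optimal intervention outcomes.

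The heart of the argument, and the step I expect to be the main obstacle, is establishing that gradient identity. The difficulty is that each planner's own objective is most naturally differentiated through the equilibrium map $\pmb{x}^* = (I-G)^{-1}(\pmb{b}+\pmb{y})$: writing $M=(I-G)^{-1}$, the chain rule produces the own-gradient in the form $M_{S_k,S_k}\,\pmb{x}^*_{S_k}$, whereas the potential gradient is expressed through the blocks of $A=M^2$. Reconciling $M_{S_k,S_k}\pmb{x}^*_{S_k}$ with $\tfrac12 A_{S_k,S_k}\pmb{z}_{S_k}+\tfrac12(A\pmb{z})_{S_k}$ requires careful block algebra exploiting the symmetry of $M$ and the identity $A_{S_k,S_l}=\sum_m M_{S_k,S_m}M_{S_m,S_l}$; equivalently, one must check that the mixed second derivatives $\partial^2 W_k/\partial\pmb{y}_{S_k}\partial\pmb{y}_{S_l}$ and $\partial^2 W_l/\partial\pmb{y}_{S_l}\partial\pmb{y}_{S_k}$ are transposes of one another, which is the defining prerequisite for any common potential to exist. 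It is precisely this cross-community symmetry that the factor $\tfrac12$ on the off-diagonal blocks of $\Tilde{A}$ is designed to enforce, and verifying it is the delicate technical core; I would isolate it as a standalone lemma and then invoke the uniqueness results of Theorem \ref{thm:unique_eq} to upgrade the coincident first-order conditions into full equality of the optimal intervention profiles.
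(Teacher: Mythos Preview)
Your approach is essentially the paper's: both arguments hinge on the block-gradient identity $\nabla_{\pmb{y}_{S_k}} W_k(\pmb{y}) = \nabla_{\pmb{y}_{S_k}} \Tilde{W}(\pmb{y})$ for every $k$, which makes $\Tilde W$ an exact potential for the planners' game, and then use uniqueness of the fixed point to conclude. The only cosmetic difference is that the paper asserts this gradient equality directly (without the chain-rule detour through $M=(I-G)^{-1}$ that you flag as the main obstacle) and finishes by observing that the BRD trajectories therefore coincide, whereas you phrase the conclusion via coincident KKT systems and Theorem~\ref{thm:unique_eq}.
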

\vspace{2mm}

\begin{proof}
    \textit{(Sketch)} This is obtained by studying the gradient of each planner's objective function. The above cooperative optimization problem in Eqn (\ref{eq:non_coop_alt}) has the exact same gradient for every planner in the original non-cooperative optimization problem in Eqn (\ref{eq:obj_alt}).
\end{proof}

Please see the appendix for the full proof. We also characterize the direction of the optimal intervention profile in the appendix.

\subsection{Lagrangian Dual and Shadow Prices}
Next we introduce some concepts related to the Lagrangian dual variables and \textit{shadow prices}, which will be used to characterize efficiency budget sharing in the next section.

Since $Q$ clearly satisfies  Slater's Constraint Qualification, the planners' optimization problems are convex regardless of whether they are group or social welfare maximizers. Then based on the KKT condition, we know that strong duality holds for both cooperative and non-cooperative planners' optimization problems. If we define the Lagrangian as
\begin{equation*}
    L(\pmb{y}, \pmb{\lambda}) = W (\pmb{y}) + \sum_{k=1}^M \lambda_k (C_k - ||\pmb{y}_{S_k}||^2),
\end{equation*}
and  $W(\pmb{y})$ is either social  or group welfare, then  we can obtain an optimal dual   $\lambda_k^*$, the shadow price for $p_k$. We can then equivalently think of $p_k$'s problem as maximizing the above Lagrangian with a cost of  intervention $\lambda_k \sum_{a_i \in S_k} y_i^2$. For convenience of notation, \rev{we use $\lambda_k^*$ (resp. $\overline{\lambda}_k$) to denote the dual optimal variable corresponding to group (resp. social) welfare maximization problem in the (P-NC$k$) (resp. (P-C)) problems for the planners.}  

\section{Efficiency and the budget allocation} \label{sec:efficiency}

In this section, we discuss the efficiency of the subgame perfect equilibria under a fixed budget allocation and then study the impact of different budget allocations on the  equilibrium and its efficiency.

\subsection{Efficiency of the subgame perfect equilibrium}
For conventional single-planner multi-agent systems, the efficiency of an NE is characterized as the ratio of the social objective value in the NE divided by the socially optimal outcome, formally
\begin{equation*}
    e(\pmb{x}^*) = \frac{U(\pmb{x}^*)}{\max_{\pmb{x} \succeq \pmb{0}} U(\pmb{x})},
\end{equation*}
and an upper bound on its reciprocal is referred to as the {\em price of anarchy} (PoA) if the objective $U(\pmb{x}) = \sum_{i=1}^N u_i(\pmb{x})$. This maxima is achievable if the agents' utility functions are strictly individually concave and always have a zero point in the first order derivative.


The introduction of group planners in our intervention problem means there are now multiple sources of efficiency loss.  Accordingly, we will decompose this into a level-1 (L1) component and a level-2 (L2) component, caused by the non-cooperation of agents and planners, respectively. Following the notation of $\pmb{y}^*$ and $\overline{\pmb{y}}$ in Eqn (\ref{eq:planner_obj}) and (\ref{eq:planner_obj_soc}), we formally define the two efficiency loss measures as
\begin{equation} \label{eq:efficiency_loss}
    e_{L1}(\pmb{y}) = \frac{U(\pmb{x}^*, \pmb{y})}{\max_{\pmb{x}} U(\pmb{x}, \pmb{y})},~~ e_{L2} = \frac{W(\pmb{x}^*, \pmb{y}^*)}{ W(\pmb{x}^*, \overline{\pmb{y}})}.
\end{equation}
Thus the overall efficiency, which resembles the conventional definition, can be written as 
\begin{equation*}
    e(\pmb{x}^*, \pmb{y}^*)  = \frac{U(\pmb{x}^*, \pmb{y}^*)}{\max_{\pmb{x} \succeq \pmb{0}, \pmb{y} \in Q} U(\pmb{x}, \pmb{y})} = e_{L2} \cdot e_{L1}(\overline{\pmb{y}})~.
\end{equation*}

The L1 efficiency has been  well studied in the literature, e.g.,  \cite{2011_Walrand}. For an arbitrary intervention profile $\pmb{y}$, if $I-2G \succ \pmb{0}$, then L1 efficiency can be written as
\begin{equation*}
    e_{L1}(\pmb{y}) = \frac{(\pmb{b}+\pmb{y})^T (I-G)^{-2} (\pmb{b}+\pmb{y})}{(\pmb{b}+\pmb{y})^T (I-2G)^{-2} (\pmb{b}+\pmb{y})}~,  
\end{equation*}
since 
\begin{equation*} 
    \frac{\partial (\sum_{i=1}^N u_i)}{\partial x_i} = - x_i + 2 \sum_{j \neq i} g_{ij} x_j + (b_i + y_i),
\end{equation*}
and by computing the fixed point we know the action profile maximizing the social welfare is $\overline{\pmb{x}} = [I - 2G]^{-1} (\pmb{b} + \pmb{y})$.

We have the following result on the L2 efficiency. 
\begin{theorem} \label{thm:L2_efficiency}
    When $\pmb{b} = \pmb{0}$ or $C_k \gg ||\pmb{b}_{S_k}||_2^2, \forall k$, the welfare in (P-C) can be computed by $W = \sum_{k=1}^M \overline{\lambda}_k C_k$, and a lower bound on the L2 efficiency for a given set of budgets is
    \begin{align} \label{eq:l2_efficiency}
        e_{L2} \geq \frac{\sum_{k=1}^M (2\lambda_k^* - \frac{1}{2} \rho_k)  C_k} {\sum_{k=1}^M \overline{\lambda}_k C_k},
    \end{align}
    where $\rho_k$ denotes the spectral radius of $A_{S_k, S_k}$.
\end{theorem}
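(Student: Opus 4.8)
The plan is to evaluate the denominator $W(\overline{\pmb{y}})$ in closed form, lower-bound the numerator $W(\pmb{y}^*)$ group-by-group from each planner's stationarity condition, and then take the ratio. For the denominator (the first claim), I would start from the KKT stationarity of (P-C). With $W(\pmb{y}) = \frac{1}{2}(\pmb{y}+\pmb{b})^T A (\pmb{y}+\pmb{b})$, stationarity restricted to block $S_k$ reads $(A(\overline{\pmb{y}}+\pmb{b}))_{S_k} = 2\overline{\lambda}_k \overline{\pmb{y}}_{S_k}$ for each $k$. Contracting with $(\overline{\pmb{y}}+\pmb{b})$ then gives $2W(\overline{\pmb{y}}) = \sum_k (\overline{\pmb{y}}_{S_k}+\pmb{b}_{S_k})^T (A(\overline{\pmb{y}}+\pmb{b}))_{S_k} = 2\sum_k \overline{\lambda}_k (C_k + \pmb{b}_{S_k}^T\overline{\pmb{y}}_{S_k})$, where I use that the budgets are tight, $\|\overline{\pmb{y}}_{S_k}\|^2 = C_k$, by Theorem~\ref{thm:unique_eq}. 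When $\pmb{b}=\pmb{0}$ the cross term vanishes and $W(\overline{\pmb{y}}) = \sum_k \overline{\lambda}_k C_k$ exactly; when $C_k \gg \|\pmb{b}_{S_k}\|_2^2$, Cauchy--Schwarz gives $|\pmb{b}_{S_k}^T\overline{\pmb{y}}_{S_k}| \le \|\pmb{b}_{S_k}\|\sqrt{C_k} \ll C_k$, so the term is negligible and the same formula holds approximately.

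For the numerator, the key observation is that the social welfare at the non-cooperative equilibrium decomposes over groups: since $\pmb{x}^* = A^{1/2}(\pmb{y}^*+\pmb{b})$, we have $W(\pmb{y}^*) = \frac{1}{2}\|\pmb{x}^*\|^2 = \sum_k \frac{1}{2}\|\pmb{x}^*_{S_k}\|^2 = \sum_k W_k(\pmb{y}^*)$, so it suffices to lower-bound each group welfare $W_k(\pmb{y}^*) = \frac{1}{2}\|\pmb{x}^*_{S_k}\|^2$. At equilibrium each planner solves its own (P-NC$k$); differentiating $W_k$ through $\pmb{x}^*_{S_k} = A^{1/2}_{S_k,\star}(\pmb{y}+\pmb{b})$ (so that $\partial \pmb{x}^*_{S_k}/\partial \pmb{y}_{S_k} = A^{1/2}_{S_k,S_k}$), the stationarity condition is $A^{1/2}_{S_k,S_k}\pmb{x}^*_{S_k} = 2\lambda_k^* \pmb{y}^*_{S_k}$, again with tight budget $\|\pmb{y}^*_{S_k}\|^2 = C_k$.

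For the core bound I would split the equilibrium action into the part driven by $p_k$'s own intervention and the rest: set $\pmb{v}_k = A^{1/2}_{S_k,S_k}\pmb{y}^*_{S_k}$ and $\pmb{w}_k = \pmb{x}^*_{S_k} - \pmb{v}_k$. Taking the inner product of the stationarity condition with $\pmb{y}^*_{S_k}$ yields $\pmb{v}_k^T \pmb{x}^*_{S_k} = 2\lambda_k^* C_k$, i.e. $\|\pmb{v}_k\|^2 + \pmb{v}_k^T\pmb{w}_k = 2\lambda_k^* C_k$. Substituting this into $\|\pmb{x}^*_{S_k}\|^2 = \|\pmb{v}_k\|^2 + 2\pmb{v}_k^T\pmb{w}_k + \|\pmb{w}_k\|^2$ gives $\|\pmb{x}^*_{S_k}\|^2 = 4\lambda_k^* C_k - \|\pmb{v}_k\|^2 + \|\pmb{w}_k\|^2 \ge 4\lambda_k^* C_k - \|\pmb{v}_k\|^2$. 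It then remains to control $\|\pmb{v}_k\|^2 = (\pmb{y}^*_{S_k})^T (A^{1/2}_{S_k,S_k})^2 \pmb{y}^*_{S_k}$, and this is where $\rho_k$ enters through the operator inequality $(A^{1/2}_{S_k,S_k})^2 \preceq A_{S_k,S_k}$: expanding the $(S_k,S_k)$ block of $A = A^{1/2}A^{1/2}$ gives $A_{S_k,S_k} = (A^{1/2}_{S_k,S_k})^2 + \sum_{l\neq k} A^{1/2}_{S_k,S_l}(A^{1/2}_{S_k,S_l})^T$, and the cross-block sum is positive semidefinite. Since $A_{S_k,S_k}$ is a principal submatrix of the PSD matrix $A$, its spectral radius is its largest eigenvalue, so $\|\pmb{v}_k\|^2 \le (\pmb{y}^*_{S_k})^T A_{S_k,S_k}\pmb{y}^*_{S_k} \le \rho_k \|\pmb{y}^*_{S_k}\|^2 = \rho_k C_k$. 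Hence $W_k(\pmb{y}^*) \ge (2\lambda_k^* - \frac{1}{2}\rho_k) C_k$; summing over $k$ and dividing by the denominator produces the claimed bound.

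I expect the main obstacle to be precisely the matching of the half-power block to $\rho_k$: the non-cooperative stationarity condition naturally involves $A^{1/2}_{S_k,S_k}$, the block of the matrix square root, whereas the bound is stated with the spectral radius of $A_{S_k,S_k}$, the block of $A$ itself. Obtaining the correct clean constant requires the Schur-complement-type inequality $(A^{1/2}_{S_k,S_k})^2 \preceq A_{S_k,S_k}$, rather than the naive and false identity $A^{1/2}_{S_k,S_k} = (A_{S_k,S_k})^{1/2}$. A secondary point to handle carefully is that (P-NC$k$) maximizes a convex quadratic over a ball, so I would invoke only the KKT stationarity as a necessary condition satisfied at the equilibrium, together with budget tightness from Theorem~\ref{thm:unique_eq}, rather than any concavity argument.
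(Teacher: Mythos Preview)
Your proof is correct and follows the same overall strategy as the paper: write the KKT stationarity for each planner, contract with the intervention vector, replace $\|\pmb{y}_{S_k}\|^2$ by $C_k$ via budget tightness, and finish with a Rayleigh-quotient bound to introduce $\rho_k$. The one substantive difference is the level at which the non-cooperative stationarity is written. The paper uses the gradient expression from Proposition~\ref{prop:noncoop_equivalent}, $\nabla_{\pmb{y}_{S_k}} W_k = A_{S_k,S_k}\pmb{y}^*_{S_k} + \tfrac{1}{2}\sum_{l\neq k} A_{S_k,S_l}\pmb{y}^*_{S_l}$, so after doubling and rearranging one has $(A\pmb{y}^*)_{S_k} = 4\lambda_k^*\pmb{y}^*_{S_k} - A_{S_k,S_k}\pmb{y}^*_{S_k}$; contracting with $\pmb{y}^*_{S_k}$ and summing gives the \emph{exact} identity $W(\pmb{y}^*) = \sum_k\bigl(2\lambda_k^* C_k - \tfrac{1}{2}(\pmb{y}^*_{S_k})^T A_{S_k,S_k}\pmb{y}^*_{S_k}\bigr)$, and only the single bound $(\pmb{y}^*_{S_k})^T A_{S_k,S_k}\pmb{y}^*_{S_k}\le \rho_k C_k$ is needed. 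You instead keep everything at the $A^{1/2}$ level and decompose the social welfare as $\sum_k W_k(\pmb{y}^*)$; this forces you to (i) discard the nonnegative $\|\pmb{w}_k\|^2$ term and (ii) invoke the extra operator inequality $(A^{1/2}_{S_k,S_k})^2 \preceq A_{S_k,S_k}$ before applying the Rayleigh step. Both routes yield the identical lower bound; the paper's is a line shorter once Proposition~\ref{prop:noncoop_equivalent} is in hand, while yours is more self-contained because it does not rely on that proposition's $\tilde A$ gradient identity.
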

\vspace{2mm}

\begin{proof}
    \textit{(Sketch)} When $\pmb{b} = \pmb{0}$ or $C_k \gg ||\pmb{b}_{S_k}||_2^2, \forall k$, the optimal intervention $\pmb{y}^*$ becomes the significant part in deciding the L2 efficiency. The shadow prices and current interventions jointly determines $\pmb{x}^*_{S_k}$. Since the budget is binding, we can replace the lengths of optimal interventions with budget values and thus shadow prices and budgets jointly determine the efficiency.
\end{proof}

Please see the appendix for the full proof.

\subsection{Budget allocation and budget transferability}

\rev{We say the budget is transferable 
if the individual budgets $C_k$ are fungible and only the aggregate budget constraint ($C$) has to be satisfied. We say the budget is non-transferable if $C_k$ is fixed and cannot be violated for all $p_k$. When all planners are social welfare maximizers and the budget is transferable, the optimization problem reduces to} 
\begin{align} \label{eq:soc_tran_budget}
    \text{maximize} ~&~ W(\pmb{y}) = \frac{1}{2}(\pmb{y} + \pmb{b})^T A (\pmb{y} + \pmb{b})  \nonumber \\
    \text{subject to} ~&~ \sum_{i = 1}^N y_i^2 \leq C~.
\end{align}
This problem is well studied in \cite{Galeotti2017}.

It is obvious that when group planners are social-welfare maximizing, they have incentives to share the budget since they have a common objective. However, it turns out that even when planners are selfish, group-welfare maximizers, they may still have incentives to share the budget. Intuitively, this is because each group $S_k$ has a decreasing marginal benefit in investing in itself, and if a neighboring group $S_l$ has a strong enough positive externality on $S_k$ and has a relatively low budget $C_l$ compared to $C_k$, then $p_k$ will have an incentive to transfer some of its budget to $p_l$. 

\begin{proposition} \label{prop:transfer_incentive}
    Between two neighboring groups $S_k$ and $S_l$, where $W_{S_k, S_l} \neq \pmb{0}$, if the following inequality holds, then $p_k$ has an incentive to share its budget with $p_l$: 
    \begin{align*}
        (C_k - C_l) (\triangledown_{\pmb{y}_{S_l}} W_l) ^T \pmb{y}_{S_k} \geq  C_l (\triangledown_{\pmb{y}_{S_k}} W_k) ^T \pmb{y}_{S_k}.
    \end{align*}
\end{proposition}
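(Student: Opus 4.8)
The plan is to formalize the phrase ``$p_k$ has an incentive to share its budget with $p_l$'' as the statement that an infinitesimal reallocation of budget from $S_k$ to $S_l$ strictly raises $p_k$'s own group welfare $W_k$, evaluated at the resulting non-cooperative equilibrium. Concretely, I would let $p_k$ cede a small amount $\delta > 0$, so that $C_k \mapsto C_k - \delta$ and $C_l \mapsto C_l + \delta$, and examine the one-sided derivative $\frac{d}{d\delta} W_k\big|_{\delta = 0^+} = -\frac{\partial W_k}{\partial C_k} + \frac{\partial W_k}{\partial C_l}$. The incentive to transfer then holds precisely when $\frac{\partial W_k}{\partial C_l} > \frac{\partial W_k}{\partial C_k}$, so the whole task reduces to computing these two marginal values of budget and showing that the displayed inequality is a sufficient condition for the neighbor-budget value to dominate.

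The first step is to evaluate $\frac{\partial W_k}{\partial C_k}$. Since $p_k$ solves (P-NC$k$) optimally and, by Theorem \ref{thm:unique_eq}, its budget constraint is binding, the envelope theorem identifies this derivative with the shadow price, $\frac{\partial W_k}{\partial C_k} = \lambda_k^*$. Equivalently, I would use the first-order condition $\nabla_{\pmb{y}_{S_k}} W_k = 2\lambda_k^* \pmb{y}_{S_k}$ together with the tight constraint $\|\pmb{y}_{S_k}\|_2^2 = C_k$ to record the identity $(\nabla_{\pmb{y}_{S_k}} W_k)^T \pmb{y}_{S_k} = 2\lambda_k^* C_k$; this is exactly the quantity that should surface on the right-hand side of the claimed inequality.

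The second step is $\frac{\partial W_k}{\partial C_l}$, which carries the externality channel, and it is where $W_{S_k,S_l} \neq \pmb{0}$ is used. Because $W_k$ depends on $\pmb{y}_{S_l}$ only through the coupling block, I would write $\frac{\partial W_k}{\partial C_l} = (\nabla_{\pmb{y}_{S_l}} W_k)^T \frac{\partial \pmb{y}_{S_l}^*}{\partial C_l}$, where $\pmb{y}_{S_l}^*$ is $p_l$'s best response. Differentiating $p_l$'s binding budget $\|\pmb{y}_{S_l}^*\|_2^2 = C_l$ gives $2(\pmb{y}_{S_l}^*)^T \frac{\partial \pmb{y}_{S_l}^*}{\partial C_l} = 1$, and combining this with $p_l$'s own first-order condition lets me express $\frac{\partial \pmb{y}_{S_l}^*}{\partial C_l}$ along $\pmb{y}_{S_l}^*$ and reduce $\frac{\partial W_k}{\partial C_l}$ to a $1/C_l$-scaled externality inner product. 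Substituting both derivatives into $\frac{\partial W_k}{\partial C_l} > \frac{\partial W_k}{\partial C_k}$ and clearing the positive factors $C_k, C_l$ then yields the stated inequality; the symmetry of $A$ (and hence of $\tilde A$ from Proposition \ref{prop:noncoop_equivalent}) is what lets me rewrite the externality gradient in the $(\nabla_{\pmb{y}_{S_l}} W_l)^T \pmb{y}_{S_k}$ form that appears, while the $(C_k - C_l)$ coefficient records the ``$C_l$ small relative to $C_k$'' intuition and must be tracked through the transfer bookkeeping.

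The hard part is the equilibrium feedback I have so far suppressed: when $C_k$ and $C_l$ move, \emph{both} planners re-optimize simultaneously, so $\pmb{y}_{S_k}^*$ and $\pmb{y}_{S_l}^*$ are coupled through the best-response system, and $\frac{\partial \pmb{y}_{S_l}^*}{\partial C_l}$ is a genuine directional change rather than a mere rescaling. A fully rigorous argument must differentiate the coupled KKT systems of (P-NC$k$) and (P-NC$l$); here I would invoke Assumption \ref{assumption:invertible}, which (through the positive-definite Jacobian established for Theorem \ref{thm:alg_converge}) guarantees that the best-response map is a contraction with an invertible sensitivity matrix, so the equilibrium is a differentiable function of the budgets. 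The remaining obstacle is to verify that these cross-sensitivity terms do not reverse the sign of the leading-order comparison; I expect this to follow from the positive-externality hypothesis on $W_{S_k,S_l}$ together with the decreasing-marginal-returns structure of the homogeneous quadratic objectives, which makes the own-budget marginal value decreasing in $C_k$ and the neighbor channel the dominant effect exactly in the regime the inequality describes.
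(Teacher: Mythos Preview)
Your envelope-theorem route is sound in spirit but considerably more elaborate than what the paper actually does, and the ``hard part'' you isolate in your final paragraph is precisely the step the paper does \emph{not} carry out. The paper's proof simply posits that after $p_k$ cedes an infinitesimal amount of budget to $p_l$, both interventions rescale radially: $\pmb{y}_{S_k}' = (1-\delta_k)\pmb{y}_{S_k}$ and $\pmb{y}_{S_l}' = (1+\delta_l)\pmb{y}_{S_l}$, with $\delta_l/\delta_k = C_k/C_l$ forced by budget conservation (since $\|\pmb{y}_{S_k}\|^2 = C_k$, $\|\pmb{y}_{S_l}\|^2 = C_l$ are binding). It then expands $W_k(\pmb{y}_{S_k}',\pmb{y}_{S_l}',\pmb{y}_{-(S_k,S_l)}) - W_k(\pmb{y})$ as a quadratic form in the $A$-blocks, drops the $O(\delta^2)$ terms, and reads off the stated inequality directly. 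There is no differentiation of the coupled KKT system, no implicit-function sensitivity, and no attempt to track how the \emph{direction} of either $\pmb{y}_{S_k}^*$ or $\pmb{y}_{S_l}^*$ responds to the new budgets.

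What your approach buys is honesty about equilibrium feedback: you correctly notice that after a transfer both planners re-optimize, so the post-transfer profile need not be a pure rescaling. The paper's argument treats the radially-scaled pair \emph{as} the post-transfer profile, which is a first-order ansatz rather than a derived fact. If your aim is to reproduce the paper's inequality, abandon the full comparative-statics program: just insert the scaled $\pmb{y}'$ into $W_k$, linearize in $\delta_k$, and use $\delta_l C_l = \delta_k C_k$ to clear the denominators---that two-line calculation is the whole proof. If instead you want a genuinely rigorous statement about the new \emph{equilibrium} welfare, the implicit-differentiation machinery you sketch is the right tool, but you should expect additional cross-sensitivity terms that the paper's displayed condition does not capture.
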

\vspace{2mm}

Please see the appendix for the proof.

We note that compared to non-transferable budget, transferable budget can enable Pareto superior solutions to the system, where every agent and every planner gets a higher payoff.

\begin{example}
    Consider the following game with only two agents, each as a singleton group, and the following utility functions
    \begin{align*}
        u_1(x_1,x_2,y_1) = x_1 - \frac{1}{2} x_1^2 + \frac{1}{2}x_1 x_2 + y_1 x_1,\\
        u_1(x_1,x_2,y_2) = x_2 - \frac{1}{2} x_2^2 + \frac{1}{2}x_1 x_2 + y_2 x_2.
    \end{align*}
    In this case, we have
    \begin{equation*}
        (I-G)^{-1} = \frac{2}{3} \begin{bmatrix} 2 & 1 \\ 1 & 2 \end{bmatrix}, 
    \end{equation*}
    \begin{equation*}
        x_1^* = 2 + \frac{2}{3}(2 y_1 + y_2), ~~x_2^* = 2 + \frac{2}{3}(y_1 + 2 y_2).
    \end{equation*}
    
    Suppose the initial budget is $C_1 = 25$, $C_2 = 0$. When not sharing the budget, the planners will fully invest in $y_1$ and $y_2$ respectively and the resulting subgame perfect equilibrium is $x_1^* = 2 + \frac{2}{3}(10+0) = \frac{26}{3}, x_2^* = 2 + \frac{2}{3}(5+0) = \frac{16}{3}$. But if $p_1$ shares the budget and make it $C_1 = 16, C_2 = 9$, we will have the equilibrium at $x_1^* = 2 + \frac{2}{3}(8+3) = \frac{28}{3}, x_2^* = 2 + \frac{2}{3}(4+6) = \frac{20}{3}$. So with budget sharing, we obtain a uniformly better outcome for all involved.
\end{example}
\section{Numerical Results} \label{sec:numerical}
We present numerical results in this section.

\subsection{Budget Allocation and Network Types}
Our focus is on examining the L2 efficiency with a number of commonly used budget allocation rules under the following types of networks/interaction graphs. 
\begin{enumerate}
    \item Type 1: strong within-group connection, weak between-group connection.  In this type of networks, groups are used to model local and regional organizations formed by individuals; within each organization, agents interact much more frequently and have higher dependencies on local neighbor's decisions. Mathematically, this means that the diagonal blocks $G_{S_k, S_k}$ have more non-zero elements and the non-zero elements have larger absolute values compared to the off-diagonal blocks $G_{S_k, S_l}$.
    
    \item Type 2: weak within-group connection, strong between-group connection. This is the opposite of Type 1; in this case the off-diagonal blocks are now more frequently filled with larger elements. This type of networks can be used to model logical  connectivity, where a group represents a set of agents playing the same role in a game. For example, in a network of sellers and buyers of a set of goods, a seller may interact more frequently with buyers than another seller. 
    In the extreme case where sellers (resp. buyers) only interact with buyers but not with other sellers (resp. buyers), a multipartite graph can be used to capture their interactions. 
    
    \item Type 3: evenly distributed connections. Here groups become a rather arbitrarily constructed concept that may not correspond to agent interactions in a game.
\end{enumerate}

We will consider the following three types of budget allocation. 
\begin{enumerate}
    \item Proportional: each group is assigned a budget proportional to its  size, i.e., $C_k = \frac{N_k}{N} C$
    \item Identical: each group is assigned an equal share of the total budget, i.e., $C_k = C/M$.
    \item Cooperative socially optimal: the allocation in the optimal solution of the cooperative optimization problem ( Eqn (\ref{eq:planner_obj_soc})), where the shadow prices $\lambda_k^*$ are the same for all $k$.
\end{enumerate}

Sample games used in the numerical experiments are generated as follows. In generating a random symmetric $G$, the diagonal elements are set to 0 as previously described in Section \ref{sec:model}. The off-diagonal elements in the diagonal blocks are generated using a Bernoulli distribution with parameter $P^{in}_{exist}$, the probability for an edge (non-zero element) to exist between a pair of agents. The absolute value of a non-zero element (strength of a connection) $|g_{ij}|$ is drawn from a uniform distribution on the interval $[S^{in}_{low}, S^{in}_{high}]$. 
The off-diagonal blocks of $G$ are similarly generated using the same approach, with parameters $P^{out}_{exist}$ and $[S^{out}_{low}, S^{out}_{high}]$, respectively. 
\rev{The signs of the connections are assigned to yield the following two types of games. In the first, within-group connections and between-group connections have the same sign (all positive); in the second, they have opposite signs (positive within-group, negative between-group; this is also referred to as conflicting groups below).} 
The $\pmb{b}$ vector is generated by sampling every element uniformly from an interval $[b_{low}, b_{high}]$.

For strong connections, $P_{exist} = 0.8$ and $S_{low} = 0.7, S_{high} = 0.9$. For weak connections, $P_{exist} = 0.2$ and $S_{low} = 0.1, S_{high} = 0.3$. For evenly distributed networks, $P^{in}_{exist} = P^{out}_{exist} = 0.5$ and $S^{in}_{low} = S^{out}_{low} = 0.4, S^{in}_{high} = S^{out}_{high} = 0.6$. We then normalize the generated $G$ by the total number of agents in the game to make sure that Assumption \ref{assumption:invertible} holds \footnote{If the product of the expected connection strengths and the connection frequency is fixed, the results are very similar. For this and brevity reasons we don't show results with combinations of low/high connection frequency with strong/weak connections.}. We also choose $b_{low} = 0.1, b_{high} = 0.5$ to make sure that agents will have an initial incentive to take action above 0 and the budget can easily achieve $C_k \gg ||\pmb{b}_{S_k}||_2^2$. These sample games contain two groups, $S_1$ with 40 agents and $S_2$ with 10 agents; we obtained very similar results  with more groups and thus will focus on this setting for brevity.


\subsection{Social Welfare and L2 Efficiency}
For each network type, we show the social welfare with non-cooperative planners and the L2 efficiency on example games with different types of budget allocation rules.

\begin{figure}[t]
\centering
\begin{minipage}{.23\linewidth}
    \centering
	\includegraphics[width=\textwidth]{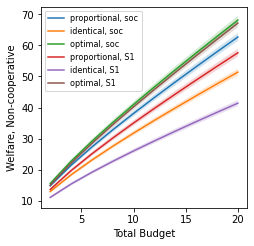}
    \caption{Type 1 all positive network.}
    \label{fig:wel_t1_pos}
\end{minipage}%
\hspace{2mm}
\begin{minipage}{.23\textwidth}
    \centering
	\includegraphics[width=\textwidth]{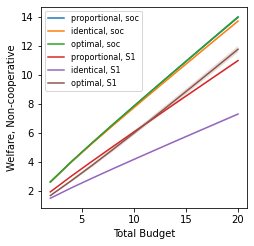}
    \caption{Type 1 conflicting groups.}
    \label{fig:wel_t1_clash}
\end{minipage}
\hspace{2mm}
\begin{minipage}{.23\textwidth}
    \centering
	\includegraphics[width=\textwidth]{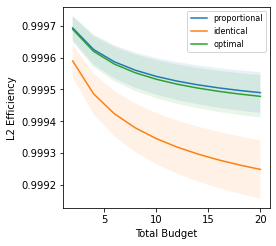}
    \caption{Type 1 all positive network.}
    \label{fig:eff_t1_pos}
\end{minipage}%
\hspace{2mm}
\begin{minipage}{.23\textwidth}
    \centering
	\includegraphics[width=\textwidth]{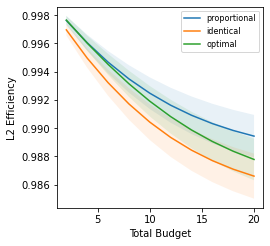}
    \caption{Type 1 conflicting groups.}
    \label{fig:eff_t1_clash}
\end{minipage}
\end{figure}

\begin{figure}[t]
\centering
\begin{minipage}{.23\textwidth}
    \centering
	\includegraphics[width=\textwidth]{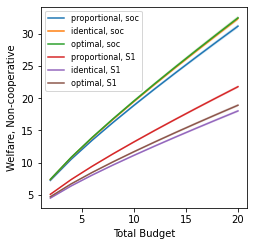}
    \caption{Type 2 all positive network.}
    \label{fig:wel_t2_pos}
\end{minipage}%
\hspace{2mm}
\begin{minipage}{.23\textwidth}
    \centering
	\includegraphics[width=\textwidth]{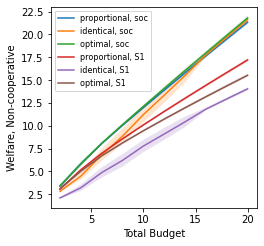}
    \caption{Type 2 conflicting groups.}
    \label{fig:wel_t2_clash}
\end{minipage}
\hspace{2mm}
\begin{minipage}{.23\textwidth}
    \centering
	\includegraphics[width=\textwidth]{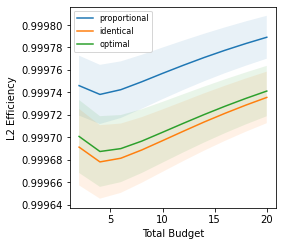}
    \caption{Type 2 all positive network.}
    \label{fig:eff_t2_pos}
\end{minipage}%
\hspace{2mm}
\begin{minipage}{.23\textwidth}
    \centering
	\includegraphics[width=\textwidth]{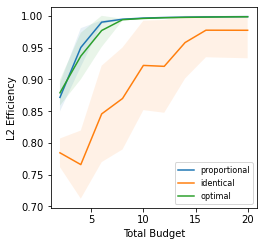}
    \caption{Type 2 conflicting groups.}
    \label{fig:eff_t2_clash}
\end{minipage}
\end{figure}

\begin{figure}[t]
\centering
\begin{minipage}{.45\textwidth}
    \includegraphics[width=\linewidth]{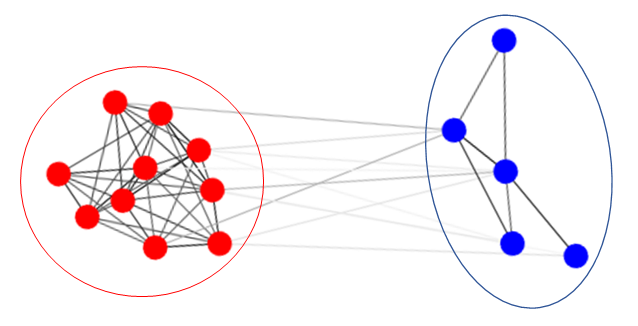}
  \caption{An example of a 15 agent, Type 1 network.}
  \label{fig:example}
\end{minipage}%
\hspace{2mm}
\begin{minipage}{.22\textwidth}
    \centering
	\includegraphics[width=\textwidth]{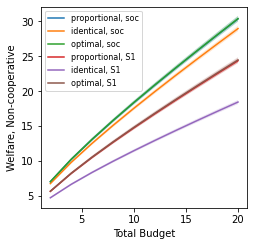}
    \caption{Type 3 network welfare.}
    \label{fig:wel_t3_pos}
\end{minipage}%
\hspace{2mm}
\begin{minipage}{.24\textwidth}
    \centering
	\includegraphics[width=\textwidth]{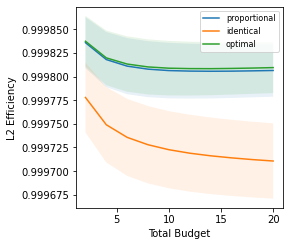}
    \caption{Type 3 network efficiency.}
    \label{fig:eff_t3_pos}
\end{minipage}
\end{figure}

In general, the L2 efficiency is fairly high in all cases except for Type 2 networks with \rev{conflicting groups}. 
Therefore, we only show the welfare results in the (P-C) problem. The main reason for this phenomenon is Assumption \ref{assumption:invertible}, where we require the elements of $G$ to have relatively small values compared to 1 and thus in all except for Type 2 with conflicting groups, the difference between matrices $A$ and $\Tilde{A}$ is small. The major cause of welfare differences come from budget allocation rules.

Figure \ref{fig:wel_t1_pos} and \ref{fig:eff_t1_pos} show the welfare for non-cooperative planners and the L2 efficiency in Type 1 network with all positive connections. In this case, the socially optimal budget allocation yields the highest cooperative and non-cooperative welfare assigns almost all budget to $S_1$.

\rev{Figure \ref{fig:wel_t1_clash} and \ref{fig:eff_t1_clash} show another case of Type 1 network where between-group connections are all negative, but within-group connections remain positive. This can model that the type of interactions between members in the same group are different from agents in different groups. In a special case of this type of network where every agent is taking a positive action level, an increase in an agent's action level can increase (resp. decrease) the agents' utilities in the same group (resp. other groups).} 
In this case, proportional allocation rule is almost socially optimal.

In the Type 2 network, where all connections are positive, Figure \ref{fig:wel_t2_pos} and \ref{fig:eff_t2_pos} show the welfare with non-cooperative planners. Interestingly, the identical budget allocation rule is actually closer to the socially optimal allocation.
For the same Type 2 network but with negative between-group connections, results shown in 
Figure \ref{fig:wel_t2_clash} and \ref{fig:eff_t2_clash} are very different from other network types since the efficiency is now significantly below 1 when we have small budgets. 

Figure \ref{fig:wel_t3_pos} and \ref{fig:eff_t3_pos} shows the results in the Type 3 network with all positive connections. In fact, for all combinations of connection signs, the trends are very similar, but the social welfare is significantly lower when we have negative connections. We also see that proportional allocation rule is almost socially optimal.

\rev{Empirically, we observe that for all types of network that when the budget grows larger, under any type of budget allocation rule the welfare  grows approximately linearly and the efficiency approximately converge to a fixed value.} 

We also measured the tightness of the theoretical lower bound on the L2 efficiency. For all above introduced network types except for Type 1 with conflicting groups, the gaps between the lower bounds and the actual L2 efficiencies are less than 0.006, for all three budget allocation rules. For Type 1 network with conflicting groups, the gap is around 0.07 for for all three budget allocation rules. All gaps are  and almost invariant in total budget.


\section{CONCLUSIONS} \label{sec:conclusion}

In this work, we studied an intervention problem in network games with community structures and multiple planners. We showed that given any intervention action, the agents will always have a unique NE. The planners can thus use backward induction and design (locally) optimal interventions. We find that no matter the planners are cooperative or non-cooperative, the system always has a unique subgame perfect equilibrium that fully spends the budget and is Pareto efficient. We also studied the efficiency of the outcomes under different settings in this system, including whether the planners are cooperative and whether the budget is transferable both analytically and numerically. Our analysis shows that we can use the Lagrangian dual optimal variable values to characterize the efficiency, and planners have incentives to share budgets even when they are non-cooperative. The budget transferability also enables uniformly better outcomes than the non-transferable case. Empirically, we observe that the type of network determines which type of (commonly used) budget allocation rule is the most efficient.




\bibliographystyle{unsrt}  
\bibliography{references} 

\newpage
\section*{APPENDIX}

\subsection{Derivation of Eqn (\ref{eq:planner_obj_as_NE})}
\vspace{-2mm}
\begin{align*}
     U_k(\pmb{x}^*, \pmb{y}_{S_k}) = &~ (\pmb{x}^*_{S_k})^T (-\frac{1}{2}I + G_{S_k, S_k}) \pmb{x}^*_{S_k} + \sum_{l \neq k} (\pmb{x}^*_{S_k})^T G_{S_k, S_l} \pmb{x}^*_{S_l} + (\pmb{b}_{S_k} + \pmb{y}_{S_k})^T \pmb{x}^*_{S_k} \\
    = &~ (\pmb{x}^*_{S_k})^T (-I + G_{S_k, S_k}) \pmb{x}^*_{S_k} + \frac{1}{2} (\pmb{x}^*_{S_k})^T \pmb{x}^*_{S_k} + \sum_{l \neq k} (\pmb{x}^*_{S_k})^T G_{S_k, S_l} \pmb{x}^*_{S_l} + (\pmb{b}_{S_k} + \pmb{y}_{S_k})^T \pmb{x}^*_{S_k}\\ 
    = &~ \frac{1}{2} (\pmb{x}^*_{S_k})^T \pmb{x}^*_{S_k}.
\end{align*}

\subsection{Proof of Theorem \ref{thm:unique_eq} and Theorem \ref{thm:alg_converge}}
Here we introduce some concepts and definitions related to the best responses and the variational inequality(VI) problem.

An agent's best response is the set of strategies that maximizes its utility given the actions taken by all the other agents. Formally, the best response is a set defined by
\begin{equation} \label{eqn:BR}
BR_i(\pmb{x}_{-i}, u_i) = \underset{x_i}{\text{argmax}} ~u_i(x_i, \pmb{x}_{-i}).
\end{equation}

Clearly, an NE of a game is a fixed point of this best response correspondence. One important tool that is useful for analyzing the uniqueness of NE is \emph{Variational Inequalities (VI)}. To establish the connection between NE and VI we assume the utility functions $u_i, \forall i = 1,\dots,N$, for agents $a_1, \dots, a_n$ are continuously twice differentiable. Let $K = \prod_{i=1}^N K_i$ and define $F: \mathbb{R}^N \rightarrow \mathbb{R}^N$ as follows:
\begin{equation} \label{eqn:F(x)}
	F(\pmb{x}) := \bigg( -\triangledown_{x_i} u_i(\pmb{x}) \bigg)_{i=1}^N ~.  
\end{equation}
Then $\pmb{x}^*$ is said to be a solution to VI$(K, F)$ if and only if
\begin{equation} \label{eqn:VI_flat}
	(\pmb{x} - \pmb{x}^*)^T F(\pmb{x}^*) \geq 0, ~  \forall \pmb{x} \in K ~.
\end{equation}
In other words, the solution set to VI$(K,F)$ is equivalent to the set of NE of the game. The following condition can guarantee the uniqueness of NE and the convergence of BRD.
\begin{definition} \label{thm:P_upsilon}
	\textbf{The $P_{\Upsilon}$ condition}: We denote 
	\begin{equation*}
	    \alpha_i(F) = \inf_{\pmb{x} \in K} ||\triangledown_i F_i||_2, ~
	    \beta_{i,j}(F) = \sup_{\pmb{x} \in K} ||\triangledown_j F_i||_2, i \neq j.
	\end{equation*}
	The $\Upsilon$ matrix generated from $F: \mathbb{R}^N \rightarrow \mathbb{R}^N$ is given as follows
	\begin{equation*} 
	\Upsilon(F) = \begin{bmatrix}
	\alpha_1(F) & -\beta_{1,2}(F) & \cdots & -\beta_{1,N}(F) \\
	-\beta_{2,1}(F) & \alpha_2(F) & \cdots & -\beta_{2,N}(F) \\
	\vdots & \vdots & \ddots & \vdots \\
	-\beta_{N,1}(F) & -\beta_{N,2}(F) & \cdots & \alpha_N(F)
	\end{bmatrix}.
\end{equation*}

	If $\Upsilon(F)$ is a P-matrix, that is, if all of its principal components have a positive determinant, then we say $F$ satisfies the $P_{\Upsilon}$ condition. 
	
	In \cite{Scutari2014}, the authors showed that if $F$ satisfies the $P_{\Upsilon}$ condition, $F$ is strongly monotone on $K$, and VI$(K,F)$ has a unique solution. Moreover, the BRD (both synchronous and asynchronous) converges to the unique NE.
\end{definition}

We will show the corresponding $\Upsilon$ matrices in (P-C) and (P-NC-alt) are P-matrices.

\begin{proof}
    We first prove that this is true for the cooperative intervention game. We first fit the planners' game into the VI framework, where $Q$ clearly is the action space, and we can similarly define the operator as
    \begin{equation*}
        F(\pmb{y}) := \bigg( -\triangledown_{y_{S_k}} W(\pmb{y}) \bigg)_{k=1}^M, 
    \end{equation*}
    Then we can define
    \begin{equation*}
        \alpha_k(F) = \inf_{\pmb{y} \in Q} ||\triangledown_k F_k||_2 = ||A_{S_k,S_k}||_2,
    \end{equation*}
    \begin{equation*}
        \beta_{k,l}(F) = \sup_{\pmb{y} \in Q} ||\triangledown_l F_k||_2 = ||A_{S_k,S_l}||_2
    \end{equation*}
    and have the corresponding $\Upsilon$ matrix such that
    \begin{equation*}
         A \succ \pmb{0} \Leftrightarrow \pmb{l}^T \Upsilon(F) \pmb{l} \geq \pmb{y}^T A \pmb{y} > 0, \forall \pmb{y}  
         \Rightarrow
         \Upsilon(F) \succ \pmb{0},
    \end{equation*}
    where $\pmb{l} \in \mathbf{R}^K, l_k = ||\pmb{y}_k||_2$.
    
    $\Upsilon \succ \pmb{0}$ if and only if it's a P-matrix, and thus there is a unique equilibrium in the cooperative planners' intervention game and Algorithm \ref{alg:compute_eq} converges to it. $A \succ \pmb{0}$ and $Q$ is convex and compact also implies the budget tightness.
    
    Similarly,  we can fit the non-cooperative planners' intervention game into the variational inequality framework, where $Q$ is the action space and the operator is
    \begin{equation*}
        F(\pmb{y}) := \bigg( -\triangledown_{y_{S_k}} W_k(\pmb{y}) \bigg)_{k=1}^M, 
    \end{equation*}
    then the $\Upsilon$ matrix is(also shown in proposition \ref{prop:noncoop_equivalent})
    \begin{equation*}
        \Upsilon(F) = \Tilde{A} \succ \pmb{0},
    \end{equation*}
    so there is a unique equilibrium in the cooperative planners' intervention game and Algorithm \ref{alg:compute_eq} converges to it. $\Tilde{A} \succ \pmb{0}$ and $Q$ is convex and compact also implies the budget tightness.
    
    It remains to prove the $\Tilde{A} \succ \pmb{0}$ part above. We note that $A_{S_k,S_k} \succ \pmb{0}, \forall k$ since they are principal minors of $A$ and $A \succ \pmb{0}$. Therefore, for $\forall \pmb{z} \in \mathbf{R}^N, \pmb{z} \neq \pmb{0}$, we have
    \begin{align*}
        2 \pmb{z}^T \Tilde{A} \pmb{z} = \pmb{z}^T A \pmb{z} + \sum_{k=1}^M \pmb{z}_{S_k}^T A_{S_k,S_k} \pmb{z}_{S_k} > 0 \Leftrightarrow \Tilde{A} \succ \pmb{0}.
    \end{align*}
    
\end{proof}

\subsection{Proof of Proposition \ref{prop:noncoop_equivalent}}
\begin{proof}
    We can write out the first order derivative in Eqn (\ref{eq:obj_alt}) for planner $p_k$ in non-cooperative intervention game,
    \begin{align*}
        \triangledown_{\pmb{y}_{S_k}} W_k(\pmb{y}) = A_{S_k, S_k} (\pmb{y}_{S_k} + \pmb{b}_{S_k}) + \frac{1}{2} \sum_{l \neq k} A_{S_k, S_l} (\pmb{y}_{S_l} + \pmb{b}_{S_l}).
    \end{align*}
    
    Similarly, we can write out the first order derivative in Eqn (\ref{eq:soc_opt_prob}) for planner $p_k$ in non-cooperative intervention game,
    \begin{align*}
        \triangledown_{\pmb{y}_{S_k}} W(\pmb{y}) = A_{S_k, S_k} (\pmb{y}_{S_k} + \pmb{b}_{S_k}) + \sum_{l \neq k} A_{S_k, S_l} (\pmb{y}_{S_l} + \pmb{b}_{S_l}).
    \end{align*}
    
    So we can see that if the planners cooperatively play an intervention game such that the $A$ matrix is replaced by the $\Tilde{A}$ matrix, we have
    \begin{align*}
        \triangledown_{\pmb{y}_{S_k}} \Tilde{W}(\pmb{y}) = &~ \Tilde{A}_{S_k, S_k} (\pmb{y}_{S_k} + \pmb{b}_{S_k})+ \sum_{l \neq k} \Tilde{A}_{S_k, S_l} (\pmb{y}_{S_l} + \pmb{b}_{S_l})\\
        = &~ A_{S_k, S_k} (\pmb{y}_{S_k} + \pmb{b}_{S_k}) + \frac{1}{2} \sum_{l \neq k} A_{S_k, S_l} (\pmb{y}_{S_l} + \pmb{b}_{S_l})\\
        = &~ \triangledown_{\pmb{y}_{S_k}} W_k(\pmb{y}).
    \end{align*}
    
    Since in both Eqn (\ref{eq:obj_alt}) and (\ref{eq:non_coop_alt}), every planner has the same gradient $\forall \pmb{y} \in Q$, and the intervention action space are the same, we can conclude that solving the original non-cooperative intervention game and the alternative cooperative intervention game with $\Tilde{A}$ are equivalent. In other words, for any arbitrary starting point, the trajectories of BRD will always be the same for the planners, and will converge to the same unique equilibrium in these two problems and thus they are equivalent.
\end{proof}

\subsection{Direction of the optimal intervention}

We first introduce the cosine similarity that defines the similarity of of two vectors in their directions, formally,
\begin{equation}
    \rho(\pmb{r}, \pmb{s}) = \frac{\pmb{r}^T \pmb{s}}{||\pmb{r}||_2 ||\pmb{s}||_2},
\end{equation}
when $\rho(\pmb{r}, \pmb{s}) = 1$, the two vectors have the same direction and $\pmb{r} = \alpha \pmb{s}$ for some $\alpha > 0$.

For non-cooperative planners, we have the following results on the optimal intervention.

\begin{proposition} \label{prop:coop_planners_eigvec}
    For an arbitrary fixed budget constraint allocation, when $\pmb{b} = \pmb{0}$, $\rho(\pmb{y}^*, \pmb{v}_{max}) = 1$, otherwise if $C_k \gg ||\pmb{b}_{S_k}||_2^2, \forall k$, $\rho(\pmb{y}^*, \pmb{v}_{max}) \approx 1$, where $\pmb{y}^*$ is the non-cooperative optimal intervention to the problem in Eqn (\ref{eq:planner_obj}), and $\pmb{v}_{max}$ is the eigenvector that corresponds to the largest eigenvalue of $\Tilde{B}$ matrix, which is defined as
    \begin{equation*}
        \Tilde{B} = \begin{bmatrix} 
        B_{S_1, S_1} & \frac{1}{2} B_{S_1,S_2} & \cdots & \frac{1}{2} B_{S_1, S_M} \\
        \frac{1}{2} B_{S_2, S_1} & B_{S_2,S_2} & \cdots & \frac{1}{2} B_{S_2, S_M} \\
        \vdots & \vdots & \ddots & \vdots \\
        \frac{1}{2} B_{S_M, S_1} & \frac{1}{2} B_{S_M,S_2} & \cdots & B_{S_M, S_M} 
        \end{bmatrix},
    \end{equation*}
    where $B = A^{1/2} D A^{1/2}, ~~D = \text{diag}((\frac{1}{\lambda^*_k} \pmb{1}_{N_k})_{k=1}^M)$.
\end{proposition}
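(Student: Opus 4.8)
The plan is to read the direction of $\pmb{y}^*$ directly off the stationarity (KKT) conditions of the non-cooperative equilibrium and to recognize the resulting identity as a (block-scaled) eigenvalue equation. By Proposition~\ref{prop:noncoop_equivalent}, $\pmb{y}^*$ is the maximizer of $\Tilde{W}(\pmb{y}) = \frac{1}{2}(\pmb{y}+\pmb{b})^T \Tilde{A} (\pmb{y}+\pmb{b})$ subject to the per-group constraints $||\pmb{y}_{S_k}||_2^2 \le C_k$, and by Theorem~\ref{thm:unique_eq} each constraint is tight. First I would write the per-planner stationarity condition, using the gradient identity established in the proof of Proposition~\ref{prop:noncoop_equivalent}, as $(\Tilde{A}(\pmb{y}^*+\pmb{b}))_{S_k} = 2\lambda_k^* \pmb{y}_{S_k}^*$ for every $k$, and stack the $M$ block equations into the single matrix identity
\begin{equation*}
    \Tilde{A}(\pmb{y}^*+\pmb{b}) = 2\Lambda \pmb{y}^*, \qquad \Lambda = \text{diag}\big((\lambda_k^* \pmb{1}_{N_k})_{k=1}^M\big) = D^{-1}.
\end{equation*}

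For the case $\pmb{b}=\pmb{0}$ this reduces to $\Tilde{A}\pmb{y}^* = 2\Lambda\pmb{y}^*$. The crucial step is to symmetrize this block-scaled eigenproblem: applying the congruence $\pmb{w} = D^{-1/2}\pmb{y}^*$ (equivalently, passing to the induced action $\pmb{x}^* = A^{1/2}\pmb{y}^*$, for which the matrix $B = A^{1/2} D A^{1/2}$ is the natural object) turns the identity into an ordinary symmetric eigenvalue equation. The claim to verify is that the matrix of this equation is exactly the block-halved matrix $\Tilde{B}$ of the statement, so that $\pmb{y}^*$ is parallel to an eigenvector of $\Tilde{B}$ and hence $\rho(\pmb{y}^*, \pmb{v}_{max}) = 1$ once the correct eigenvalue is identified.

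To pin down that the relevant eigenvector is $\pmb{v}_{max}$ (associated with the largest eigenvalue of $\Tilde{B}$) rather than an arbitrary one, I would invoke global optimality together with positivity. Since $\pmb{y}^*$ maximizes the quadratic form over the feasible set, among all solutions of the stationarity equation it attains the largest Rayleigh quotient; and when the network connections are nonnegative, so that $(I-G)^{-1}$, $A$, $\Tilde{A}$, and $\Tilde{B}$ are entrywise nonnegative and the optimal intervention satisfies $\pmb{y}^* \succeq \pmb{0}$, this can be made rigorous through the Perron--Frobenius theorem, which forces a nonnegative eigenvector to correspond to the dominant eigenvalue. For the perturbed regime $C_k \gg ||\pmb{b}_{S_k}||_2^2$, tightness gives $||\pmb{y}_{S_k}^*||_2^2 = C_k \gg ||\pmb{b}_{S_k}||_2^2$, so $\pmb{y}^*+\pmb{b}$ and $\pmb{y}^*$ coincide in direction up to a relative error of order $||\pmb{b}||_2/\sqrt{\sum_k C_k}$; substituting into the stationarity identity yields $\Tilde{A}\pmb{y}^* \approx 2\Lambda\pmb{y}^*$ with vanishing relative error as the budgets grow, and the same argument gives $\rho(\pmb{y}^*,\pmb{v}_{max}) \approx 1$, with the gap controlled by and tending to zero with $||\pmb{b}||_2/\sqrt{\sum_k C_k}$.

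The hard part will be the symmetrization step together with the dominance claim, both made delicate by the fact that the multipliers $\lambda_k^*$ generally differ across groups. Because $\Lambda$ is only block-scalar, the feasible set is a product of spheres rather than a single sphere, so the stationarity equation is a block-scaled eigenproblem and its maximizer is not automatically the top eigenvector of a fixed symmetric matrix. The two points that need care are (i) checking that the congruence by $D^{1/2}$ reproduces precisely the factor-$\frac{1}{2}$ off-diagonal-block structure that defines $\Tilde{B}$ from $B$ (this is where the off-diagonal blocks of $A$ and the distinct $\lambda_k^*$ interact, and where I expect the cleanest exact statement to require either isolated/weakly coupled groups or the $C_k \gg ||\pmb{b}_{S_k}||_2^2$ approximation), and (ii) certifying that the eigenvalue selected by the global maximizer is indeed the largest, for which the nonnegativity/Perron--Frobenius route appears the most robust.
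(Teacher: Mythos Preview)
Your proposal identifies the right starting point (the stacked KKT identity $\Tilde{A}\pmb{y}^* = 2\Lambda\pmb{y}^*$ with $\Lambda = D^{-1}$) and correctly flags the symmetrization as the delicate step, but the specific symmetrization you propose does not land on $\Tilde{B}$. The congruence $D^{1/2}\Tilde{A}D^{1/2}$ has $(S_k,S_l)$-block $\frac{1}{\sqrt{\lambda_k^*\lambda_l^*}}\,\Tilde{A}_{S_k,S_l}$, i.e.\ the block-diagonal scaling $D$ sits \emph{outside} $A$. By contrast, $\Tilde{B}$ is built from $B = A^{1/2} D A^{1/2}$, where $D$ is \emph{sandwiched} between the $A^{1/2}$ factors; its $(S_k,S_l)$-block is $\sum_m \tfrac{1}{\lambda_m^*}(A^{1/2})_{S_k,S_m}(A^{1/2})_{S_m,S_l}$ (halved off the diagonal). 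These coincide only when all $\lambda_k^*$ are equal, so your congruence step fails precisely in the heterogeneous-multiplier regime that the proposition is about. This is not a mere technicality that the large-budget approximation fixes: the $C_k \gg \|\pmb{b}_{S_k}\|^2$ regime kills the $\pmb{b}$-term but does nothing to reconcile $D^{1/2}\Tilde{A}D^{1/2}$ with $\Tilde{B}$.

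The paper avoids this by working at the level of the objectives rather than the matrix equation. It rescales each planner's welfare to $\hat{W}_k = W_k/\lambda_k^*$, which can be written $\hat{W}_k = \tfrac{1}{2}\|(D^{1/2}A^{1/2}\pmb{y})_{S_k}\|_2^2$; because $D$ now acts on $A^{1/2}\pmb{y}$ (the equilibrium action) rather than on $\pmb{y}$ directly, the associated ``$A$-matrix'' is exactly $B = A^{1/2}DA^{1/2}$, and the Proposition~\ref{prop:noncoop_equivalent} construction applied verbatim to $B$ yields $\Tilde{B}$. The payoff of the rescaling is that the KKT multipliers of the $\hat{W}_k$ problem are all equal to $1$ at $\pmb{y}^*$, so $\pmb{y}^*$ is a stationary point of $\tfrac{1}{2}\pmb{y}^T\Tilde{B}\pmb{y}$ on the \emph{single} sphere $\|\pmb{y}\|_2^2 = C$; its global optimality on that sphere then identifies it with the top eigenvector of $\Tilde{B}$ without any Perron--Frobenius-type sign assumption. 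Your Perron--Frobenius route imports a nonnegativity hypothesis on $G$ that the statement does not make and that the paper does not need.
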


\begin{proof}
    We begin with the case where $\pmb{b} = \pmb{0}$, and consider the following non-cooperative objective for $p_k$,
    \begin{align} 
        \text{maximize} ~&~ \hat{W}_k(\pmb{y}) = \frac{1}{2}||(B^{1/2} \pmb{y})_{S_k}||_2^2 = \frac{1}{2\lambda_k^*} ||\pmb{x}^*_{S_k}||_2^2  \nonumber \\
        \text{subject to} ~&~ \sum_{i:a_i \in S_k} (y_i)^2 \leq C_k,
    \end{align}
    clearly, we have
    \begin{equation*}
        \triangledown_{\pmb{y}_{S_k}} \hat{W}_k \big|_{\pmb{y} = \pmb{y}^*} = \frac{1}{\lambda_k^*} \triangledown_{\pmb{y}_{S_k}} W_k \big|_{\pmb{y} = \pmb{y}^*}.
    \end{equation*}
    
    From the definition of $\pmb{\lambda}^*$ and $\pmb{y}^*$, we have
    \begin{equation*}
        \triangledown_{\pmb{y}_{S_k}} W_k \big|_{\pmb{y} = \pmb{y}^*} - 2 \lambda_k^* \pmb{y}_{S_k}^* = \pmb{0},
    \end{equation*}
    and thus
    \begin{equation*}
        \triangledown_{\pmb{y}_{S_k}} \hat{W}_k \big|_{\pmb{y} = \pmb{y}^*} - 2 \pmb{y}_{S_k}^* = \pmb{0}.
    \end{equation*}
    
    We can then define the following problem
    \begin{align*} 
        \text{maximize} ~&~ \Tilde{W}_B(\pmb{y}) = \frac{1}{2} \pmb{y}^T \Tilde{B} \pmb{y}\\
        \text{subject to} ~&~ \sum_{i = 1}^N (y_i)^2 \leq C,
    \end{align*}
    similar to proposition \ref{prop:noncoop_equivalent}, we know that 
    \begin{equation*}
        \triangledown_{\pmb{y}_{S_k}} \hat{W}_k = \triangledown_{\pmb{y}_{S_k}} \Tilde{W}_B(\pmb{y}),
    \end{equation*}
    and we know from the KKT conditions that since $||\pmb{y}^*||_2^2 = C$, we have $\pmb{y} = \pmb{y}^*, \mu^* = 1$ as the primal and dual optimal values to the Lagrangian
    \begin{equation*}
        L_B(\pmb{y}, \mu) = \Tilde{W}_B(\pmb{y}) - \mu(C - ||\pmb{y}^*||_2^2),
    \end{equation*}
    and thus since $B = A^{1/2} D A^{1/2}, A \succ \pmb{0}, D \succ \pmb{0}$, we know $B \succ \pmb{0}$ and then similar to the proof of proposition \ref{prop:noncoop_equivalent}, $\Tilde{B} \succ \pmb{0}$. Since $\pmb{y}^*$ is the primal optimal, we know that $\rho(\pmb{y}^*, \pmb{v}_{max}) = 1$, where $\pmb{v}_{max}$ is the eigenvector of $\Tilde{B}$'s largest eigenvalue.
    
    When $\pmb{b} \neq \pmb{0}$ but $C_k >> ||\pmb{b}_{S_k}||_2^2$, it's not hard to show that
    \begin{equation*}
        (\pmb{y}+\pmb{b})^T A (\pmb{y}+\pmb{b}) \rightarrow \pmb{y}^T A \pmb{y}, \text{ as } ||\pmb{y}_{S_k}||/||\pmb{b}_{S_k}|| \rightarrow \infty, \forall k,
    \end{equation*}
    an thus the above results still hold.
\end{proof}

In \cite{Galeotti2017}, the single planner's problem can be thought of multiple cooperative planners with transferable budget, then the shadow prices for every planner is the same in the optimal intervention, and thus $\rho(\pmb{y}^*, \pmb{v}_{max}) = 1$, where $\pmb{v}_{max}$ is the eigenvector of $A$'s largest eigenvalue. Moreover, in that case, when $G$ is an all positive(negative) matrix, $\pmb{v}_{max}$ is the eigenvector of $G$'s largest eigenvalue($-G$'s smallest eigenvalue).

\subsection{Proof of Theorem \ref{thm:L2_efficiency}}
\begin{proof}
    We will provide the derivations for the numerator and denominator separately. And we begin with the case where $\pmb{b} = \pmb{0}$.
    
    The denominator corresponds to the cooperative planners' intervention game, we can write out the Lagrangian function as follows
    \begin{equation*}
        L(\pmb{y}, \pmb{\lambda}) = W(\pmb{y}) + \sum_{k=1}^M \lambda_k(C_k - ||\pmb{y}_{S_k}||^2).
    \end{equation*}
    As mentioned earlier in section \ref{sec:efficiency}, the strong duality holds, and from the KKT conditions, we have at the optimum that
    \begin{align*}
        \triangledown_{\pmb{y}_{S_k}} W \big|_{\pmb{y} = \overline{\pmb{y}}} - 2 \overline{\lambda}_k \overline{\pmb{y}}_{S_k} = \pmb{0}
        ~\Leftrightarrow~ A_{S_k,S_k} \overline{\pmb{y}}_{S_k} + \sum_{l \neq k} A_{S_k,S_l} \overline{\pmb{y}}_{S_l} = 2 \overline{\lambda}_k \overline{\pmb{y}}_{S_k}.
    \end{align*}
    So we can rewrite the social welfare as
    \begin{align*}
        W(\overline{\pmb{y}}) = &~ \frac{1}{2} \overline{\pmb{y}}^T A \overline{\pmb{y}} \\
        = &~ \frac{1}{2} \sum_{k=1}^M \overline{\pmb{y}}_{S_k}^T \bigg(\sum_{l=1}^M A_{S_k,S_l} \overline{\pmb{y}}_{S_l} \bigg)\\
        = &~ \frac{1}{2} \sum_{k=1}^M \overline{\pmb{y}}_{S_k}^T (2 \overline{\lambda}_k \overline{\pmb{y}}_{S_k})\\
        = &~ \sum_{k=1}^M \overline{\lambda}_k ||\overline{\pmb{y}}||_2^2\\
        = &~ \sum_{k=1}^M \overline{\lambda}_k C_k.
    \end{align*}
    
    For the numerator that corresponds to the non-cooperative planners' intervention game, we can write out the Lagrangian function for $p_k$ as follows
    \begin{equation*}
        L_k(\pmb{y}_{S_k}, \pmb{y}_{-S_k}, \lambda_k) = W_k(\pmb{y}_{S_k}, \pmb{y}_{-S_k}) + \lambda_k(C_k - ||\pmb{y}_{S_k}||^2).
    \end{equation*}
    From the KKT conditions, we have at the equilibrium that
    \begin{align*}
        &~ \triangledown_{\pmb{y}_{S_k}} W_k \big|_{\pmb{y} = \pmb{y}^*} - 2 \lambda^*_k \pmb{y}^*_{S_k} = \pmb{0}\\
        \Leftrightarrow &~ A_{S_k,S_k} \pmb{y}^*_{S_k} + \frac{1}{2} \sum_{l \neq k} A_{S_k,S_l} \pmb{y}^*_{S_l} = 2 \lambda^*_k \pmb{y}^*_{S_k}.
    \end{align*}
    So we can rewrite the social welfare as
    \begin{align*}
        W(\pmb{y}^*) = &~ \frac{1}{2} (\pmb{y}^*)^T A \pmb{y}^* \\
        = &~ \frac{1}{2} \sum_{k=1}^M (\pmb{y}^*_{S_k})^T \bigg(\sum_{l=1}^M A_{S_k,S_l} \pmb{y}^*_{S_l} \bigg)\\
        = &~ \frac{1}{2} \sum_{k=1}^M (\pmb{y}^*_{S_k})^T (4\lambda^*_k \pmb{y}^*_{S_k} - A_{S_k,S_k} \pmb{y}^*_{S_k})\\
        \geq &~ \sum_{k=1}^M (2\lambda^*_k - \frac{1}{2} \rho_k) ||\pmb{y}^*_{S_k}||_2^2\\
        = &~ \sum_{k=1}^M (2\lambda^*_k - \frac{1}{2} \rho_k) C_k,
    \end{align*}
    and since $e_{L2} = \frac{W(\pmb{y}^*)}{W(\overline{\pmb{y}})}$, we know the lower bound holds. 
    
    When $\pmb{b} \neq \pmb{0}$ but $C_k >> ||\pmb{b}_{S_k}||_2^2$, it's not hard to show that
    \begin{equation*}
        (\pmb{y}+\pmb{b})^T A (\pmb{y}+\pmb{b}) \rightarrow \pmb{y}^T A \pmb{y}, \text{ as } ||\pmb{y}_{S_k}||/||\pmb{b}_{S_k}|| \rightarrow \infty, \forall k,
    \end{equation*}
    an thus the above results still hold.
\end{proof}

\subsection{Proof of Proposition \ref{prop:transfer_incentive}}
\begin{proof}
    Suppose now group $k$ transfers an infinitesimal amount of budget to group $l$, then after the transfer, the new intervention profile of group $k$ and $l$ becomes
    \begin{equation*}
        \pmb{y}_{S_k}' = (1 - \delta_k) \pmb{y}_k, ~~ \pmb{y}_{S_l}' = (1 + \delta_l) \pmb{y}_l, ~~ \frac{\delta_l}{\delta_k} = \frac{||\pmb{y}_k||^2}{||\pmb{y}_l||^2} = \frac{C_k}{C_l}.
    \end{equation*}
    Then we look at group $k$'s welfare after the transfer, if it increases, group $k$ has an incentive to do the transfer
    \begin{align*}
        W_k(\pmb{y}_{S_k}', \pmb{y}_{S_l}', \pmb{y}_{-(S_k,S_l)}) - W_k(\pmb{y})
        = &~ \frac{1}{2} \big( [(1 + \delta_l)(1 - \delta_k)-1] \pmb{y}_{S_k}^T A_{S_k, S_l} \pmb{y}_{S_l} \\
        &~ + [(1 - \delta_k)^2 - 1] \pmb{y}_{S_k}^T A_{S_k, S_k} \pmb{y}_{S_k}\\
        &~ + 2[(1 - \delta_k) -1] \pmb{y}_{S_k}^T (A_{S_k, :}) \pmb{b} \big) \geq 0\\
        \Leftrightarrow &~ (\delta_l - \delta_k) (\triangledown_{\pmb{y}_{S_l}} W_k)^T \pmb{y}_{S_l} \geq \delta_k (\triangledown_{\pmb{y}_{S_k}} W_k)^T \pmb{y}_{S_k}\\
        \Leftrightarrow &~ (C_k - C_l) (\triangledown_{\pmb{y}_{S_l}} W_k)^T \pmb{y}_{S_l} \geq C_l (\triangledown_{\pmb{y}_{S_k}} W_k)^T \pmb{y}_{S_k}.
    \end{align*}
    since $\delta_l$ and $\delta_k$ are all infinitesimal, we can ignore their second order products in the above derivations.
\end{proof}








\end{document}